\providecommand{\thisvolume}[1]{this volume of EPTCS, Open Publishing Association}
\newtheorem{defi}{Definition}
\newtheorem{thm}{Theorem}
\newtheorem{lem}{Lemma}
\newtheorem{prop}{Proposition}
\newenvironment{proof}{\begin{trivlist}
                      \item[]{\bf Proof}
                     \hspace{0cm} }{\hfill {\large $\bullet$}
                      \end{trivlist}}
\title{On some Classes of Reversible 2-head Automata}
\author{
Benedek Nagy
\institute{Department of Mathematics,
Eastern Mediterranean University\\
99628 Famagusta, North Cyprus, Mersin-10, Turkey\\ Department of Computer Science, Institute of Mathematics and Informatics,\\ Eszterh\'azy K\'aroly Catholic University, Eger, Hungary}
\email{nbenedek.inf@gmail.com}
\and
Walaa Yasin
\institute{Department of Mathematics,
Eastern Mediterranean University\\
99628 Famagusta, North Cyprus, Mersin-10, Turkey\\
}}
\begin{document}
\maketitle

\begin{abstract}
Deterministic $2$-head finite automata which are machines that process an input word from both ends are analyzed for their ability to perform reversible computations. This implies that the automata are backward deterministic, enabling unique forward and backward computation. We explore the computational power of such automata, discovering that, while some regular languages cannot be accepted by these machines, they are capable of accepting some characteristic  
linear languages,
e.g., the language of palindromes.
Additionally, 
we prove that restricted variants, i.e., both 1-limited reversible 2-head finite automata and complete reversible 2-head finite automata are less powerful and they form a proper hierarchy.
In the former, in each computation step exactly one input letter is being processed, i.e., only one of the heads can read a letter. These automata are also characterized by putting their states to classes based on the head(s) used to reach and to leave the state.  
In the complete reversible 2-head finite automata, it is required that any input can be fully read by the automaton.
The accepted families are also compared to the classes generated by left 
deterministic linear grammars.
\end{abstract}

\section{Introduction}
Formal languages can be defined using grammars or recognized through various computational devices. The latter approach is particularly valuable for determining whether a specific word belongs to a given language. These devices can operate either deterministically or non-deterministically, with the latter often being more efficient for such tasks, e.g., in terms of complexity. However, this efficiency is often offset by the reduced expressive power of non-deterministic machines. 

Watson–Crick (WK) automata, introduced as part of DNA computing, combine automata theory with biological principles \cite{DNAComputing}. They operate on double-stranded tapes (representing DNA molecules), where each strand is read separately by read-only heads, maintaining the Watson–Crick complementarity relation. Various restricted classes of WK automata have been studied, with constraints on states and/or transitions.
A key concept in WK automata is the $5^{\prime}\to 3^{\prime}$ direction, which aligns with the biochemical reading direction of DNA strands. Some models use a sensing parameter \cite{OnhierarcyofsensingWatson}, which ensures that the reading heads are within a fixed distance or meet at the same position, enabling decision-making at that point. Research has shown that WK automata can characterize specific language classes, such as linear context-free languages and their subclasses (e.g., even linear languages \cite{iConcept}).
A newer model has been proposed to eliminate the sensing parameter while maintaining the same computational power \cite{NaCoNP22,AFL17}. It has been proven that its deterministic version accepts exactly the same class of languages (2detLIN) as the previous deterministic model with the sensing parameter \cite{10.1007/s00236-019-00362-6,UCNC18}. 
This class includes the class of even linear languages, which are of interest due to their learnability in formal languages \cite{SempereG94}.

Further studies explore related and expanded models, such as WK counter machines \cite{EgeciogluHN11,HegedusNE12}, WK automata with iterated reading \cite{Leupold}, double and 2-head jumping automata \cite{KocmanKM16,KocmanKMN22}, translucent letter models \cite{withtranslucent,lin-trans}, 2-head pushdown automata \cite{NCMA15,2headpushdown}, 2-head automata with output \cite{NCMA-NK19,NagyK21} and 2-head automata for circular words \cite{NCMA24neck}. These models refine the classification of languages and computational hierarchies, providing deeper insights into automata theory and DNA computing. We should also recall various concepts of determinism that are defined for 2-head automata: state-deterministic and quasi-deterministic $5^{\prime}\to 3^{\prime}$ WK automata \cite{state-det,quasi-det}, respectively. In computations in these models, not the next configuration, but the next state is defined uniquely based on the {current} state or on the {current} configuration, respectively.

The reversibility of standard Watson–Crick (WK) automata is an important aspect related to their computational behavior and efficiency. Since WK automata operate on double-stranded tapes with complementary base pairing, reversibility can be explored in terms of whether a computation can be uniquely reversed to reconstruct the original input \cite{ReversibleWatson} (with the condition that the initial state does not have any predecessor states). Now, concentrating on
 $5^{\prime} \to 3^{\prime}$ WK automata, the case is little bit different. Here, the computations follow the natural biochemical reading direction of DNA strands, and their deterministic variants ensure that each configuration leads to a unique subsequent configuration. However, for a WK automaton to be fully reversible, every transition must be invertible, meaning that for each computational step, there exists a unique predecessor configuration. This property is closely linked to bijective complementarity and sensing parameters, which influence how information is processed when the heads of the automaton meet. Investigating reversible WK automata (\cite{ReversibleWatson}) can provide insights into energy-efficient computations, as reversible computing is known to reduce energy dissipation, a concept relevant to both theoretical computer science and molecular computations.

In this research, the focus is on machines based on the concept of reversible deterministic 
2-head finite automata. Their non-deterministic variants are introduced in \cite{lin-auto,nagy2012class} and in \cite{NagyDNA07,AFL17}. 
A non-deterministic 2-head finite automaton consists of a finite control mechanism that reads symbols from a read-only input tape using two input heads. The left (first) head scans symbols from left to right, while the right (second) head moves from right to left. During computation, the finite control non-deterministically selects one or both of the heads, read(s) the next symbol, and transitions to a new state, with the new state chosen in a non-deterministic manner. Computation concludes when the two heads finish reading the input word and meet at some point on the tape. As is standard, the input word is accepted if there exists a computation that ends in a final state when the heads meet. 

Non-deterministic 2-head finite automata can recognize the class of linear context-free languages generated by such context-free grammars, where each production contains at most one non-terminal on the right-hand side. Additionally, there is a correspondence between 2-head finite automata and linear grammars, allowing for conversion between the two representations. The deterministic 2-head automata have less expressive power, the class 2detLIN is accepted by them \cite{10.1007/s00236-019-00362-6}. In this paper, we address the reversibility of the computations in these models.
Thus, we recall a related model from \cite{kutrib2023reversible}.
A reversible two-party Watson-Crick automaton (REV-PWK)  consists of two independent finite automata that process the same input string in opposite directions while communicating via messages to ensure synchronization and reversibility. 

In this paper, we consider special deterministic 2-head automata that are also backward deterministic, i.e., they are reversible. We also show that language families 2detLIN, and the classes accepted by reversible 2-head automata and their 1-limited and complete variants show a proper hierarchy, in which some of the new classes are incomparable  with the class of regular languages. Some of the main results are summarized in the diagram shown in the concluding section. 
Note that because of the page limit some of the proofs are omitted.

\section{Preliminaries and Basic Definitions} 
This section reviews some of the 
fundamental concepts in formal language and automata theory. We assume that the reader has prior knowledge of these fundamental concepts in this field.  
Otherwise, they may refer to sources such as \cite{Hopcroft1979IntroductionTA,book} for further details.

Let \( V \) be a finite, non-empty set, an \textit{alphabet} of symbols, commonly referred to as letters. Sequences formed using these letters are known as \textit{words}. A collection of such words constitutes a language over the alphabet \( V \). The symbol \( \lambda \) represents the empty word.
  
To establish our notation, we recall that a \textbf{linear context-free grammar} is a generative grammar defined as $(N, V, S, P)$, where $N$ and $V$ represent the sets of nonterminal and terminal symbols, respectively, $S$ is the start symbol ($S\in N$), and $P$ is the set of production rules. Each production rule (or rewriting rule) follows the form $A\to u$, where $A\in N$ and $u \in V^{*}+V^{*}NV^{*}$. The class of linear context-free grammars generates the family LIN, which consists of all linear context-free languages. In general, for any generative or accepting system $A$, we denote its generated or accepted language as $L(A)$.

\subsection{2-Head Finite Automata Accepting Linear Languages and their Variants}

We recall a variant of finite automata with two heads based on \cite{lin-auto,nagy2012class} that plays essential roles in this paper. They read the word from the beginning and the end, in parallel, in opposite directions.

\begin{defi}
A non-deterministic 2-head automaton is a 5-tuple $( Q, q_{0}, V, \delta, F )$ with the transition function 
\[
\delta : Q \times (V \cup \{\lambda\}) \times (V \cup \{\lambda\}) \to 2^Q,
\]
where $Q$ is the finite set of states, $q_{0} \in Q$ is the starting (or initial) state, $F \subset Q$ contains the final states, and $V$ is the 
alphabet.  
\end{defi}
\smallskip

Computations and accepting computations are defined through configurations containing the {current} state and the part of the input that has not been processed yet:
 \\
When input $w$ is provided, the computation begins with $(q_0, w)$. A computation step  $(q, aw^\prime b)\Rightarrow (p,w^\prime)$  can occur if $p\in \delta(q,(a,b))$. The automaton processes the entire input word until the heads meet. Each letter is read by one of the heads during an accepting computation. The word $w$ is accepted if $(q_{0}, w)\Rightarrow ^{*} (q, \lambda)$ where $q\in F$.

In the graph of an automaton in transitions, each arrow is labeled with a pair of symbols $(a, b)$, indicating that the first head reads symbol $a$ and the second head reads symbol $b$, with both heads moving in one step. We allow any or both $a$ and $b$ to be $\lambda$. Additionally, we use the notations $\rightarrow a$ to represent $(a, \lambda)$ and $\leftarrow a$ to represent $(\lambda, a)$. 
As, actually, only those states play role in any computations that can be reached from the initial state, we may assume in the followings that all states of the used automata are reachable, i.e., there is an input such that the/a computation on this input includes a configuration containing the given state.  
Transitions when both heads read the empty word can easily be eliminated (somewhat similarly as $\lambda$-transitions from a traditional 1-head nondeterministic finite automaton); moreover, they are not allowed in deterministic variant (similarly as there is no $\lambda$-transition in the case of traditional deterministic finite automata).

\begin{defi}
A 2-head finite automaton is deterministic if for every possible configuration, there is at most one transition step that can occur and at least one of the heads of the automaton reads an input letter in this transition (if any). In other words, a 2-head automaton is deterministic if and only if for every word $ w \in V^*$ and every state $ q \in Q$ there is at most one pair $w^{\prime} \in V^*$ and $q^{\prime} \in Q$ such that the transition $(q,w) \Rightarrow (q^{\prime},w^{\prime})$ is valid and $w\ne w'$.
\end{defi}

The deterministic version of these automata is weaker than the general, non-deterministic variant: i.e., they do not accept all linear languages. Consider the linear language $L_{1\lor3}=\{a^n b^n\} \cup \{a^{3n} b^n\}$ ($n > 0$). It is clear that it can be accepted by a non-deterministic 2-head finite automaton trying both possibilities to check in a non-deterministic way. 
For a deterministic automaton, informally, it should be decided which head moves in which step. With a finite control, it is impossible to know at first how many steps of the first head should be followed by a step of the second head. For more details on the model, see e.g., \cite{10.1007/s00236-019-00362-6}.  

\begin{thm}
A 2-head automaton $(Q, q_{0}, V, \delta, F)$ is deterministic if and only if
the following conditions are satisfied:
\begin{enumerate}
\item For each $q\in Q$, $\delta(q,(\lambda, \lambda))= \emptyset$.
\item For each $q \in Q$ and $a,b \in V\cup \{ {\lambda}\}$, $ab \ne \lambda$ we have $|\delta(q, (a,b))| \leq 1$.
\item If $\delta(q, (a, \lambda))\neq \emptyset$, where $a\in V$ and $\delta(q, (c,d))\neq \emptyset$, then $c \neq a$ and $c\ne \lambda$.
\item If $\delta(q, (\lambda, a))\neq \emptyset$, where $a\in V$ and $\delta(q, (c,d))\neq \emptyset$, then $d \neq a$ and $d\ne \lambda$.
\end{enumerate} 
\end{thm}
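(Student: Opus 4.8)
The plan is to prove both implications by reasoning about which transitions are \emph{applicable} to a fixed configuration. For a configuration $(q,u)$ with $u$ nonempty, write $a$ for the first letter and $b$ for the last letter of $u$; then the only transitions that can possibly fire are the three associated with the letter pairs $(a,\lambda)$, $(\lambda,b)$ and, when $|u|\ge 2$, $(a,b)$, since head~1 may read $a$ or nothing and head~2 may read $b$ or nothing. The pair $(\lambda,\lambda)$ is the remaining logical possibility, but it reads no letter and is exactly what Condition~1 forbids. So the whole theorem reduces to controlling the coexistence of these three candidate transitions.

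For the direction ($\Leftarrow$), I would assume Conditions 1--4 and fix an arbitrary reachable configuration $(q,u)$. If $u=\lambda$ no transition applies, so suppose $u$ is nonempty with first letter $a$ and last letter $b$. I split into cases according to which single-head transition is defined. If $\delta(q,(a,\lambda))\ne\emptyset$, then Condition~3 (read as a statement about every \emph{other} defined transition $(c,d)$) forces every competing pair to satisfy $c\ne a$ and $c\ne\lambda$; this rules out both $(\lambda,b)$, whose first component is $\lambda$, and $(a,b)$ with $b\in V$, whose first component is $a$. Hence $(a,\lambda)$ is the unique applicable transition, and Condition~2 gives it a unique target. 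The case $\delta(q,(\lambda,b))\ne\emptyset$ is symmetric via Condition~4. If neither single-head transition is defined, only $(a,b)$ can fire, again with a unique target by Condition~2. In every case there is at most one pair $(w',q')$ with $w'\ne u$, so the automaton is deterministic.

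For the direction ($\Rightarrow$), I would argue by contraposition: if any condition fails I exhibit a configuration from which two distinct steps are possible. A violation of Condition~1 yields a step $(q,u)\Rightarrow(q',u)$ reading no letter, contradicting the determinism requirement that some head advances. A violation of Condition~2 gives two states $p\ne p'$ in some $\delta(q,(a,b))$ with $ab\ne\lambda$; feeding the shortest word carrying that letter pair, namely $ab$, $a$, or $b$, produces two steps to the same $w'$ but different states. A violation of Condition~3 means $\delta(q,(a,\lambda))\ne\emptyset$ while some distinct $(c,d)$ with $c=a$ or $c=\lambda$ is also defined; choosing a word that begins with $a$ and ends with the letter required by $(c,d)$ lets both transitions fire and leaves different unread portions. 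Condition~4 is handled symmetrically.

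The step I expect to be the main obstacle is the bookkeeping in the ($\Rightarrow$) direction when the two relevant letters coincide, for instance when $\delta(q,(a,\lambda))$ and $\delta(q,(\lambda,a))$ are both defined. For a word such as $a^n$ the two steps leave the \emph{same} unread portion $a^{n-1}$, so a witness separating them requires either distinct target states or, over a non-unary alphabet, a longer word such as $aba$ whose two steps leave $ba$ versus $ab$. Pinning down these short-word and coinciding-letter cases, and confirming that Conditions 3 and 4 are exactly the constraints needed, neither too weak nor too strong, is the delicate part; the rest is the routine case analysis sketched above.
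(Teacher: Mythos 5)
The paper itself contains no proof of this theorem (it is among those omitted for the page limit), so your proposal has to stand on its own. Your ($\Leftarrow$) direction does: for a configuration $(q,u)$ with first letter $a$ and last letter $b$, conditions 3 and 4 make the three candidate pairs $(a,\lambda)$, $(\lambda,b)$ and $(a,b)$ mutually exclusive, condition 2 pins down the target state, and condition 1 excludes steps that read nothing; your case split also covers $|u|=1$ correctly. That half is complete.

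The ($\Rightarrow$) direction, however, is left genuinely open at exactly the point you flagged, and that point is not routine bookkeeping: it is where the theorem is sensitive to which of the two readings of the paper's definition of determinism one adopts. Concretely, take $V=\{a\}$ and $\delta(q,(a,\lambda))=\delta(q,(\lambda,a))=\{p\}$. Every configuration $(q,a^n)$ has exactly one successor pair $(p,a^{n-1})$, so this automaton satisfies the formal clause of Definition 2 (``at most one pair $w'$, $q'$ with $(q,w)\Rightarrow(q',w')$ and $w\ne w'$''), yet it violates condition 3. No witness word exists over a unary alphabet, so a contrapositive argument phrased in terms of successor configurations cannot close this case --- under the configuration-level reading the theorem is simply false here. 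The only way to finish is to invoke the definition's other formulation, ``at most one transition step that can occur'', counted at the level of transition rules, so that it matters which head reads: then two distinct rules apply to $(q,a)$ and the case is immediate, while your other witnesses (the word $ab$ against a competing $(a,b)$ or $(\lambda,d)$ with $d\ne a$, the word $a$ when the two targets differ, and $aba$ for equal targets over a non-unary alphabet) dispose of all remaining cases. As written, your proposal identifies this obstacle but does not resolve it; to have a proof you must either commit explicitly to the rule-level reading of determinism, or record the unary example above as a counterexample to the configuration-level reading.
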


Deterministic 2-head finite automata are denoted as D-2H, while
the class of all languages recognized by them is denoted by 2detLIN \cite{NCMA25-MN2head,10.1007/s00236-019-00362-6,UCNC18}.
Moreover, specific variants of these automata are also defined and used \cite{CiE2009,Annales,AFL17,NaCoNP22,UCNC18}. %
In \emph{1-limited} variant, in each computation step exactly one of the heads is reading an input symbol. On the other hand, we may also define ``complete'' variant that is somewhat analogous to the completely defined deterministic finite automata used in the case of regular languages. We say that a D-2H automaton is \emph{complete} if for every possible configuration $(q,w)$ with a nonempty word there is exactly one pair $w^{\prime} \in V^*$ and $q^{\prime} \in Q$ such that the transition $(q,w) \Rightarrow (q^{\prime},w^{\prime})$ is valid.
It is also clear (as it is proven in \cite{10.1007/s00236-019-00362-6}) that every language in 2detLIN is accepted also by a 2-head automaton that is deterministic, 1-limited and complete. 

As our main topic is reversibility, we give our first new definition. Remember that we assume that all states of the automata we consider are reachable, and therefore, the set of all possible configurations is the same as the set of  configurations that appear in some computation. 

\begin{defi} 
A 2-head finite automaton $A$ is backward deterministic if at each possible configuration in a computation on a given input there is at most one predecessor configuration in the computations of $A$, i.e., in each configuration it is clear what was read in the last transition (by knowing what parts of the original input have already been processed) and from which state we arrived to the state of the {current} configuration.
\end{defi}

\begin{thm}
A 2-head finite automaton $A$ is backward deterministic if and only if $\forall w^{\prime} \in V^*$ and $\forall q^{\prime} \in Q$ and $\forall a,b\in V$ there exists at most one 
$w \in V^*$ with $w = aw'$, $w=w'b$ or $w=aw'b$
and at most one $q \in Q$ such that $(q,w) \Rightarrow (q^{\prime},w^{\prime})$.
\end{thm}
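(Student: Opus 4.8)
The plan is to read the claimed condition as a purely syntactic restatement of the definition of backward determinism, obtained by unfolding the computation-step relation. The only real work is to describe precisely what a predecessor of a configuration looks like; once that is done, both implications fall out by inspection.

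First I would fix an arbitrary possible configuration $(q', w')$ and determine the shape of its predecessors. By the definition of a step, $(q, w) \Rightarrow (q', w')$ holds exactly when $w = a w' b$ with $q' \in \delta(q,(a,b))$ for some $a, b \in V \cup \{\lambda\}$. By the convention fixed in the preliminaries, transitions in which both heads read $\lambda$ are eliminated, so every step consumes at least one input letter and the predecessor word $w$ strictly extends $w'$. Splitting on which head(s) moved, $w$ must have exactly one of the three forms $w = aw'$ (left head reads $a$), $w = w'b$ (right head reads $b$), or $w = aw'b$ (both heads read), with $a, b \in V$. This identifies the predecessors of $(q', w')$ with the pairs $(q, w)$ of these forms for which the transition is valid, which is precisely the family over which the statement quantifies.

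I would then record the elementary fact that ``at most one predecessor configuration $(q, w)$'' is equivalent to ``at most one admissible word $w$ and at most one admissible state $q$'': a configuration is the pair $(q, w)$, so forcing both coordinates to be unique forces the pair, and conversely a unique pair has unique components. Granting this, the forward implication is immediate, since backward determinism gives a unique predecessor $(q, w)$ of each $(q', w')$, hence a unique $w$ among the three forms and a unique $q$; and the converse is its mirror image, as the hypothesis yields at most one such $(w, q)$ and therefore at most one predecessor of the arbitrary configuration $(q', w')$.

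The step needing the most care is the bridge between the two levels at which uniqueness is phrased: the definition speaks of predecessors that actually occur in computations of $A$, whereas the stated condition ranges over all candidate pairs $(q, w)$ admitting a valid step into $(q', w')$. Justifying that these coincide relies on the standing assumption that all states are reachable, so that every configuration under consideration is a possible one; I would also double-check the exhaustiveness and mutual consistency of the three word-shapes together with the exclusion of the degenerate $(\lambda, \lambda)$ step, since these are exactly the places where a careless case split could drop or double-count a predecessor.
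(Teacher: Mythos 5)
Your overall strategy---unfold the one-step relation, classify predecessor words into the three shapes $aw'$, $w'b$, $aw'b$, and identify uniqueness of the pair $(q,w)$ with uniqueness of its components---is the same as the paper's, and your converse direction (condition implies backward determinism) is sound. The gap is in the forward direction, and it sits exactly at the point you flagged as ``needing the most care,'' but the correct resolution is not reachability. You treat backward determinism as the statement that every configuration $(q',w')$ has at most one predecessor $(q,w)$ over \emph{all} candidate pairs, with the letters $a,b$ ranging freely, and correspondingly you collapse the quantifier ``$\forall a,b\in V$'' of the theorem into a union over candidates. But the paper's definition of backward determinism is relativized to a computation on a given input: the predecessor must be unique ``by knowing what parts of the original input have already been processed,'' i.e.\ the letters adjacent to the unread factor $w'$ are known data, and this knowledge is precisely what the outer universal quantification over $a,b$ in the theorem encodes. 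Under that definition your intermediate claim ``backward determinism gives a unique predecessor $(q,w)$ of each $(q',w')$'' is false: the two-state automaton used later in the paper (loops at $q_0$ on $(b,b)$ and $(a,\lambda)$, plus a $(c,\lambda)$-transition to $q_f$) is reversible, yet the configuration $(q_0,w')$ has the two predecessors $(q_0,aw')$ and $(q_0,bw'b)$; these presuppose different original inputs, so they do not violate backward determinism. Consequently what you actually prove is an equivalence between two statements, each strictly stronger than the corresponding side of the theorem, and the implication ``backward deterministic $\Rightarrow$ condition'' does not follow from your argument.

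The repair is to keep $a$ and $b$ fixed throughout, as the paper does: fix $w'$, $q'$, $a$, $b$, and observe that all three candidate words $aw'$, $w'b$, $aw'b$ are consistent with one and the same original input (one having $a$ immediately to the left and $b$ immediately to the right of $w'$). Hence two distinct valid pairs among these candidates would give a single configuration, in a computation on that single input, two predecessors---contradicting the definition. This contradiction argument (the paper's second direction) is what your ``immediate'' step should be replaced by. Your side remark about excluding $(\lambda,\lambda)$-transitions is fine and matches the paper's convention, and your observation that uniqueness of the pair is equivalent to uniqueness of each coordinate is correct once the letters are held fixed.
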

\begin{proof}
For the first direction, for every $ w' \in V^* $, $ q' \in Q $, and $ a,b \in V $, there is at most one $ w \in V^* $ such that $ w = aw' $, $ w = w'b $, or $ w= aw'b $, and at most one $ q \in Q $ such that $ (q, w) \Rightarrow (q', w') $. This means that for any configuration $ (q', w') $, there is at most one possible previous configuration $ (q, w) $ leading to it in one step if either $a$ is read by the first head or $b$ is read by the second head, or both at the same time.
Therefore, the reverse transition relation (from $ (q', w') $ to its predecessors $ (q, w) $) is deterministic: each configuration has at most one predecessor. This is precisely the definition of backward determinism.

For the second direction, suppose $ A $ is backward deterministic, i.e., for every  configuration $ (q', w') $, there is at most one configuration $ (q, w) $ such that
$(q, w) \Rightarrow (q', w')$ knowing what was already read from the input with each head.

We now want to show that this implies the stated condition: Let us suppose, for contradiction, that for some $ w' \in V^* $, $ q' \in Q $, and $ a, b \in V $, there exist two different pairs $ (q_1, w_1) $ and $ (q_2, w_2) $ such that
$ w_1,w_2 \in \{aw' , w'b, aw'b\} $ and at least one of $q_1 \ne q_2$ and $w_1\ne w_2$ holds. 
 However, in this case, both
$ (q_1, w_1) \Rightarrow (q', w') $, and $ (q_2, w_2) \Rightarrow (q', w') $ are valid computation steps, therefore $(q', w') $ has at least two predecessor configurations, contradicting backward determinism.

So, the assumption that more than one such $w$ and/or $ q $ exist leads to a contradiction. Therefore, the condition must hold.
\end{proof}

Now we turn to reversible 2-head finite automata, which is our main topic here. Basically, reversibility in finite automata is meant with respect to the possibility of stepping the computation back and forth. So, the machine has also to be backward deterministic.

For classical finite automata, an automaton for the reversal of the accepted language can be obtained by constructing the reversal \cite{rabin1959finite}, or dual automaton \cite{onreversibleautomata}, i.e., by reversing the transitions and interchanging initial and final states. For 2-head automaton, we can obtain an automaton for the reversal of the language by simply interchanging the role of  the first and second heads in each transition.

\begin{lem}\label{reversal}
Let $A=(Q, q_0, V, \delta, F)$ be a 2-head finite automaton that recognizes a language $L$. Then, there exists a 2-head finite automaton $A^{\prime} = (Q, q_{0}, V, \delta^{\prime}, F)$ that recognizes the reversal of $L$ (denoted as $L^R$), which can be obtained by interchanging the transitions of the first and second heads in $A$, i.e., $\delta^{\prime}(q,(a,b))= \delta(q,(b,a))$ for all $q \in Q$ and $a,b\in V\cup \{\lambda\}$.
\end{lem}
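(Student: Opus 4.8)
The plan is to show that the computations of $A'$ on a reversed input $w^R$ stand in one-to-one correspondence with the computations of $A$ on $w$, traversing the same sequence of states. Since the starting state $q_0$ and the final set $F$ are identical in $A$ and $A'$, once this correspondence is established acceptance is preserved under reversal and $L(A') = L^R$ follows at once.

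First I would establish a single-step correspondence. Consider a step $(q, a u' b) \Rightarrow_A (p, u')$ of $A$, where $a,b \in V \cup \{\lambda\}$ are not both $\lambda$ and $p \in \delta(q,(a,b))$. Because $a$ and $b$ are single letters or $\lambda$, we have $(a u' b)^R = b\,(u')^R\,a$. Reading from the configuration $(q, b\,(u')^R\,a)$ in $A'$, the first head would consume $b$ and the second head $a$, which is permitted precisely when $p \in \delta'(q,(b,a))$. By the defining identity $\delta'(q,(b,a)) = \delta(q,(a,b))$, this holds if and only if $p \in \delta(q,(a,b))$. Hence $(q, a u' b) \Rightarrow_A (p, u')$ if and only if $(q, (a u' b)^R) \Rightarrow_{A'} (p, (u')^R)$, and the same equivalence covers the single-head cases $a = \lambda$ and $b = \lambda$ with no modification.

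Next I would lift this to full computations by induction on the number of steps, maintaining the invariant that for all states $q,p$ and words $u,v$ one has $(q,u) \Rightarrow^*_A (p,v)$ if and only if $(q, u^R) \Rightarrow^*_{A'} (p, v^R)$. The base case of zero steps is immediate since $u^R = v^R$ exactly when $u = v$, and the inductive step composes one application of the single-step correspondence with the induction hypothesis. Instantiating the invariant with $q = q_0$ and $v = \lambda$, and using $\lambda^R = \lambda$, gives $(q_0, w) \Rightarrow^*_A (p, \lambda)$ if and only if $(q_0, w^R) \Rightarrow^*_{A'} (p, \lambda)$. As $p$ ranges over the same set $F$ in both machines, $w$ is accepted by $A$ exactly when $w^R$ is accepted by $A'$, so $L(A') = \{w^R : w \in L\} = L^R$.

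I expect the main subtlety to lie not in any deep argument but in the bookkeeping: one must track the reversal operation consistently through the configuration words and confirm that the degenerate single-head transitions ($a = \lambda$ or $b = \lambda$) are governed by the very same identity $\delta'(q,(a,b)) = \delta(q,(b,a))$. Since the step-level correspondence is a genuine equivalence in both directions, the construction yields $L^R$ exactly rather than a one-sided inclusion, and no separate argument for the converse containment is needed.
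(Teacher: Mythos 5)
Your proof is correct and takes essentially the same approach as the paper's: both justify the construction by observing that swapping the roles of the two heads makes the computations of $A'$ on $w^R$ mirror, state for state, those of $A$ on $w$, so that acceptance is preserved since $q_0$ and $F$ are unchanged. Your version simply makes this precise with an explicit single-step correspondence and an induction on computation length, where the paper leaves the argument informal.
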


\begin{proof}
Since a 2-head automaton reads an input string using two separate heads, reversing the language requires adjusting how these heads process the input. By modifying the transition function such that the roles of the two heads are swapped, the automaton can effectively process the reversed string in the same manner as the original automaton processed a word of $L$. This ensures that $A^{\prime}$ definitely accepts $L^R$ while preserving the computational structure of $A$.
\end{proof}

Thus, we can see that the reversal of the language is not connected to backward transitions. However, it is also interesting to see what happens, if we apply a similar construction as for finite automata to 2-head finite automata.

\begin{defi}\label{def-rev}
 A 2-head automaton is reversible (it is \textnormal{R-2H}, for short), if it is both deterministic and backward deterministic.   
\end{defi}

\begin{thm}\label{char-rev}
A D-2H automaton is reversible, if for any 2 transitions $\delta(q_{1},(a,b))=q_{2}$ and $\delta(q_{1}^{\prime},(c,d))=q_{2}$, then
$a\neq c$ or $b\neq d$ with $|ad|\geq 1$ and $|bc|\geq 1$. 
\end{thm}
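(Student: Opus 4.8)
The plan is to reduce reversibility to the backward-determinism characterization proved just above and then to read that characterization off the transitions that share a target state. By Definition~\ref{def-rev} the automaton is already deterministic, so every configuration has a unique successor; what remains is to guarantee that every configuration has a unique \emph{predecessor}. First I would describe, for a fixed configuration $(q_2,w')$, all of its possible predecessors. A transition $\delta(q_1,(a,b))=q_2$ is exactly what produces $(q_2,w')$ in one step from the configuration obtained by restoring the letters just read, namely $(q_1,\,a\,w'\,b)$, where the leading $a$ is present precisely when $a\in V$ and the trailing $b$ precisely when $b\in V$. Hence the predecessors of $(q_2,w')$ are in one-to-one correspondence with the transitions entering $q_2$, and two of them, $\delta(q_1,(a,b))=q_2$ and $\delta(q_1',(c,d))=q_2$, yield two distinct predecessors of one common configuration exactly when they are backward-confusable, i.e.\ when $w'$ can be extended on its two ends in a single way compatible with both.

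Next I would make ``backward-confusable'' explicit. The restored words $a\,w'\,b$ and $c\,w'\,d$ can sit simultaneously inside the same longer input only if the heads that actually move agree on the letters they put back: the left ends are compatible iff $a=\lambda$, or $c=\lambda$, or $a=c$, and the right ends are compatible iff $b=\lambda$, or $d=\lambda$, or $b=d$. Therefore the two transitions are \emph{not} confusable iff the left reads disagree decisively (both are letters of $V$ and differ) or the right reads disagree decisively; this is the content of the stated condition once the $\lambda$-cases are unfolded. The two length requirements are what rule out the genuinely dangerous mixed pattern: if one transition moves only the first head while the other moves only the second head, the two restored letters land on opposite ends of $w'$ and can always be placed on a common input, so confusion is unavoidable; one orientation of this crossed pattern has $b=c=\lambda$ and is excluded by $|bc|\ge 1$, the mirror orientation has $a=d=\lambda$ and is excluded by $|ad|\ge 1$, while $a\neq c$ or $b\neq d$ disposes of the same-side patterns (and of two transitions with identical read pairs but different sources).

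Finally I would assemble the argument: assuming the stated condition for every pair of transitions into every state, no configuration admits two distinct predecessors, so by the backward-determinism characterization the automaton is backward deterministic; being deterministic as well, it is reversible by Definition~\ref{def-rev}, which gives the ``if'' direction claimed.

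The step I expect to be the main obstacle is the $\lambda$-bookkeeping underlying the second paragraph: one must check, across the nine combinations of (both heads / first head only / second head only) for the two transitions, that ``the restored letters can be realized on the same input'' is genuinely equivalent to the two compatibility conditions above, and in particular that the crossed case can never be repaired no matter how $q_1,q_1'$ are chosen. Once this correspondence between incoming transitions and predecessors of $(q_2,w')$ is set up cleanly, the remaining verification is a routine case distinction.
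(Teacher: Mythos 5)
Your overall strategy is the same as the paper's: reduce reversibility to uniqueness of predecessor configurations, let the disjunction $a\neq c$ or $b\neq d$ handle two transitions entering the same state, and let $|ad|\geq 1$, $|bc|\geq 1$ rule out the crossed pattern in which one transition moves only the first head and the other only the second. Your setup is in fact more explicit than the paper's proof: you spell out the correspondence between transitions entering $q_2$ and candidate predecessors of $(q_2,w')$, and your compatibility criterion is correct — two incoming transitions are backward-confusable if and only if neither side disagrees \emph{decisively} (i.e., with both symbols in $V$).

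The gap is the step where you assert that this decisive-disagreement criterion ``is the content of the stated condition once the $\lambda$-cases are unfolded,'' and the case you defer as routine is precisely the non-routine one. Take $\delta(p,(a,b))=q_2$ and $\delta(p',(a,\lambda))=q_2$ with $a,b\in V$ and $p\neq p'$; this is compatible with forward determinism since the two sources differ. The stated condition holds under its literal reading: $b\neq d$ (because $d=\lambda$), $|ad|=1$, $|bc|=2$. Yet by your own criterion these transitions are confusable: on an input whose unread part $w'$ is preceded by $a$ and followed by $b$, both $(p,\,aw'b)$ and $(p',\,aw')$ are legal predecessors of $(q_2,w')$, so the automaton is not backward deterministic — one can also check this directly against the paper's characterization of backward determinism (for the letter pair $(a,b)$, both $w=aw'$ and $w=aw'b$ admit valid steps into $(q_2,w')$). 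So the condition read literally is \emph{not} sufficient; it becomes sufficient only under the stronger reading in which $b\neq d$ requires $b,d\in V$. Your summary mapping — ``$a\neq c$ or $b\neq d$ disposes of same-side patterns, the length conditions dispose of crossed patterns'' — omits exactly these mixed one-head-versus-both-heads pairs; and note that under the decisive reading the two length conditions are implied by decisive disagreement and do no independent work, so the mapping cannot be made coherent under either reading. To be fair, the paper's own proof glosses over the identical case (``this avoids ambiguity in the reverse of the transition''), so your more careful setup has surfaced a genuine weakness in the statement rather than introduced a new error; but as a proof of the theorem as written, the argument does not close.
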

\begin{proof}
To prove that a deterministic $2$-head automaton is reversible under the given conditions, we must establish that every transition in the automaton is uniquely reversible.
Suppose that there exist two transitions: $\delta(q_1, (a, b)) = q_2$, and $\delta(q_1^{\prime}, (c, d))= q_2$, we must ensure that the original states $q_1$ and $q_1^{\prime}$ can be uniquely determined from $q_2$ by knowing what was the original input.
The given condition states that at least one of 
$a \neq c$ and $b \neq d$ holds. This ensures that no two transitions map 
the read input pairs $(a,b)$ and $(c,d)$ to the same state unless at least one symbol differs. This avoids ambiguity in the reverse of the transition.
Additionally, the condition $|ad| \geq 1$ and $|bc| \geq 1$  ensures that the left head in the first transition and the right head in the second transition cannot be $\lambda$ at the same time, and the same applies to the right head from the first transition and the left head from the second transition. This means at least one of the heads reads a nonempty input, a letter, in the transition, and it is decided which head moves in which step in the reversal, ensuring that reversing the transition is always possible.
\end{proof}

\begin{lem}\label{reg-notrev}
The regular language $ L_{ab}= \{a^n b^m \mid n, m \geq 0\} $ cannot be accepted by \textnormal{R-2H}.
\end{lem}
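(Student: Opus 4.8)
The plan is to argue by contradiction: assume some R-2H automaton $A=(Q,q_0,V,\delta,F)$ accepts $L_{ab}$ and derive a failure of backward determinism. Since $\lambda,a,b\in L_{ab}$, the state $q_0$ is final and its moves on a single $a$ and on a single $b$ each consume exactly one letter (a two-head move $(a,b)$ would require two symbols), so each is a left-only move $(a,\lambda)$ or a right-only move $(\lambda,a)$; by the determinism conditions no state carries both a left-only and a right-only transition, hence both single-letter moves of $q_0$ use the \emph{same} head. Up to applying Lemma~\ref{reversal}, which turns an R-2H automaton for $L_{ab}$ into one for $L_{ab}^R=b^*a^*$ and interchanges the two heads, I may assume the left head is used. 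Then $q_0$ has exactly the transitions $(a,\lambda)\to s_1$ and $(b,\lambda)\to t_1$, the determinism conditions forbidding any further transition (its left symbol would have to avoid $a$, $b$, and $\lambda$); in particular, on every nonempty input the first step reads the leading letter with the left head.

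Next I would track the inputs $b^m$. Because $q_0$ reads a leading $b$ deterministically into $t_1$, the residual language of $t_1$ is $\{w\mid bw\in L_{ab}\}=b^*$, and the computation on $b^m$ is forced to be $(q_0,b^m)\Rightarrow(t_1,b^{m-1})\Rightarrow\cdots\Rightarrow(t_m,\lambda)$ with each step reading a single $b$ and every $t_i\in F$. This produces a deterministic ``$b$-walk'' $q_0\to t_1\to t_2\to\cdots$ whose edges are single-$b$ reads. The key observation is that every $t_i$ with $i\ge 1$ has residual language exactly $b^*$: deleting one $b$ from the front (a left-only step) or from the back (a right-only step) of $b^*$ again yields $b^*$, so this holds regardless of which head $t_i$ uses; moreover a two-head move $(b,b)$ is impossible at $t_i$, since $t_i$ must still accept the one-letter word $b$. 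As $L_{q_0}=L_{ab}\ne b^*$ (for instance $a\in L_{ab}\setminus b^*$), no $t_i$ with $i\ge 1$ can equal $q_0$.

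Finally I would extract the contradiction. Since $Q$ is finite and $q_0$ is never revisited, the $b$-walk is a nonempty tail $q_0\to\cdots\to t_p$ followed by a cycle $t_p\to\cdots\to t_{q-1}\to t_p$, and the cycle entry $t_p$ receives a $b$-reading transition both from its tail predecessor $t_{p-1}$ and from its cycle predecessor $t_{q-1}$, with $t_{p-1}\ne t_{q-1}$. Each of these transitions is left-only $(b,\lambda)$ or right-only $(\lambda,b)$, so either two transitions with the \emph{same} read enter $t_p$ from different sources, or $t_p$ is entered by a left-only and a right-only move on the letter $b$. In every case the pair $(b,b)$ admits two distinct predecessors of $t_p$, contradicting the characterization of backward determinism. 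The symmetric language $b^*a^*$ is covered by the reversal reduction, completing the proof.

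The step I expect to be the main obstacle is precisely this head-agnostic analysis of the $b$-walk: individual states along the walk may read with \emph{either} head, so the statements ``each step consumes a single $b$'' and ``$L_{t_i}=b^*$'' must be established without fixing a head, and the two-head move $(b,b)$ must be excluded at each $t_i$. The second delicate point is making the initial normalization rigorous: because $L_{ab}^R=b^*a^*\ne L_{ab}$, the reduction to a single leading head cannot be a naive symmetry argument but must invoke Lemma~\ref{reversal} together with the invariance of the determinism and backward-determinism conditions under interchanging the two heads.
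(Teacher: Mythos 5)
Your proof uses the same core mechanism as the paper's: after ruling out a two-head first move from $q_0$, a long input $b^m$ forces a walk of single-$b$-reading transitions through finitely many states, the walk must close into a cycle that cannot contain $q_0$, and the cycle entry then has two distinct predecessor configurations on the letter $b$, contradicting backward determinism. In places your version is tighter than the paper's: the induction showing that every state $t_i$ on the $b$-walk has residual language exactly $b^*$ (hence $t_i\neq q_0$) \emph{regardless of which head it reads with}, together with the exclusion of a $(b,b)$ move at each $t_i$ because $t_i$ must still accept the one-letter word $b$, makes rigorous what the paper handles with a terse ``maximal $j$'' cycle-entry argument and the parenthetical remark that the repeated state ``cannot be $q_0$''.

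The one step that does not work as written is the normalization ``up to Lemma~\ref{reversal} I may assume the left head is used''. The reversal construction swaps the heads but also changes the language: it yields an R-2H automaton for $L_{ab}^R=b^*a^*$, while your subsequent computations are specific to $a^*b^*$. For $b^*a^*$ the residual of a leading $b$ is $\{w \mid bw\in b^*a^*\}=b^*a^*$, not $b^*$, so the key conclusion $t_i\neq q_0$ fails there (an automaton for $b^*a^*$ may well loop at $q_0$ on $(b,\lambda)$; for that language the obstruction lives on the $a$-walk instead). The fix is small but necessary: after reversing, also rename the letters $a\leftrightarrow b$ --- a bijective renaming preserves determinism, backward determinism and head usage, and maps $b^*a^*$ back to $a^*b^*$ --- or, equivalently, rerun your argument tracking $a^m$ instead of $b^m$. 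This is exactly what the paper does in its third case, where it interchanges \emph{both} the roles of the heads \emph{and} the letters $a$ and $b$.
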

\begin{proof}
We aim to prove that the language $
L_{ab} = \{a^n b^m \mid n, m \geq 0\} $ cannot be accepted by any reversible 2-head automaton. A reversible automaton must satisfy the property that every configuration has at most one predecessor and one successor, ensuring unique invertibility of transitions.

Assume that there exists a reversible 2-head automaton $ M = (Q,\{a,b\}, q_0, \delta , F)$ that accepts the language $ L_{ab} = \{a^n b^m \mid n, m \geq 0\} $. Let $|Q| = k $, where $ k $ is finite. The automaton $ M $ must process the input string $ a^n b^m $ while maintaining reversibility. We consider the three possible cases for how $ M $ might read the input $a^nb^m$ ($n,m>>k$) in the first steps of the computation:
\begin{enumerate}
\item Reading $a$'s and $b$'s together. Suppose $ M $ uses transitions like $p = \delta(q_0, (a, b)) $ to read the first $ a $ and the last $ b $ simultaneously with a state $p$ (maybe it equals to $q_0$). However, as also the word $a \in L$, it must be accepted, thus there must be a transition $q' = \delta(q_0,(\lambda,a))$, since the transition $q'\in \delta(q_0,(a,\lambda))$ would contradict to the fact that $M$ is deterministic. Also, it is clear that $q' \ne q_0$, otherwise input $a^kb^ba$ would also be accepted. On the other hand $b\in L$ must also be accepted, thus there must be a transition from the initial state to accept it. However, neither $q''\in \delta (q_0,(b,\lambda))$, nor $q''\in \delta (q_0,(\lambda,b))$ could be, as any of those would contradict to the determinism of $M$ with the previously described two transitions. Thus, to process the word $a^nb^m$ with $n,m>>k$, the first computation step cannot process both an $a$ and a $b$. 
\item Reading only an $ a $ first. Then there is a transition $p\in \delta(q_0,(a,\lambda))$ (maybe with $p=q_0$) in $M$.
However, $M$ must also accept every word from $b^*$, thus it must also have a transition $q'\in \delta(q_0,(b,\lambda))$ (since $M$ is deterministic, it cannot have a transition $q'\in \delta(q_0,(\lambda,b))$). If $q' = q_0$, then $M$ would also accept a word of the form $b a^n b^m$ that leads to a contradiction. Thus, $q' \ne q_0$ must hold. In this case, however, as all other words of $b^*$ must be accepted, there is an accepting continuation of the computation from the configuration $(q', b^m)$ with $m> k$.
As $m$ is larger than the number of states, there is a state $q''$ such that from $(q'', b^j)$ the computation also reaches the configuration $(q'',b^i)$ with $i< j\leq m$, i.e., $M$ has a cycle reading only $b$-s. Let $q''$ be the state for which the value of $j$ is the maximal (among states included in the cycle). Then, there are at least two different transitions, i.e., transitions from two distinct states  to $q''$ by reading only a $b$. 
However, in this case $M$ cannot be reversible, causing a contradiction. (Note that $q''$ could be the same as $q'$, but it cannot be $q_0$.)
\item Reading a $b$ first by the transition $p\in \delta(q_0,(\lambda,b))$. 
  The proof of this case is symmetric to the previous case, by interchanging the roles of left and right heads and the letters $a$ and $b$, leading again to contradictions.
\end{enumerate}
In all cases, the finite set of states 
of $ M $ and the requirement of reversibility create fundamental limitations. 
Cycles in the automaton 
violate reversibility by making it impossible to uniquely reconstruct the history of the computation. The deterministic nature of $ M $ further restricts its ability to handle transitions without ambiguity.

Thus, no reversible 2-head automaton can accept the regular language $ L = \{a^n b^m \mid n, m \geq 0\} $.
\end{proof}

Based on the fact that each regular language is in 2detLIN and  the previous Lemma, we can state our first hierarchy result.  

\begin{thm}
The class of languages accepted by reversible 2-head automata is a proper subset of the class accepted by deterministic 2-head automata.
$$2revLIN \subsetneq 2detLIN .$$
\end{thm}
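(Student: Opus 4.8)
The plan is to prove the two halves of a strict inclusion separately: first the set inclusion $2revLIN \subseteq 2detLIN$, and then its properness via a single witness language. The structure is dictated by the remark immediately preceding the statement, which points to the fact that every regular language lies in $2detLIN$ together with Lemma~\ref{reg-notrev}.

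For the inclusion I would argue directly from Definition~\ref{def-rev}: an R-2H automaton is, by definition, both deterministic and backward deterministic, so in particular it is a D-2H automaton. Hence any language accepted by some R-2H automaton is accepted by a D-2H automaton, which gives $2revLIN \subseteq 2detLIN$ with no further work. This half is purely definitional and requires no construction.

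For properness I would exhibit a language sitting in $2detLIN \setminus 2revLIN$. The natural candidate is $L_{ab} = \{a^n b^m \mid n,m \geq 0\}$. This language is regular, and every regular language is accepted by a D-2H automaton, so $L_{ab} \in 2detLIN$; on the other hand, Lemma~\ref{reg-notrev} already establishes that $L_{ab}$ cannot be accepted by any R-2H automaton, so $L_{ab} \notin 2revLIN$. Thus $L_{ab}$ witnesses $2detLIN \setminus 2revLIN \neq \emptyset$, and combined with the inclusion above this yields the strict containment $2revLIN \subsetneq 2detLIN$.

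The only genuinely hard step here — showing that the simple regular language $L_{ab}$ escapes R-2H — has already been carried out in Lemma~\ref{reg-notrev}, where the finiteness of the state set forces a $b$-cycle (or, symmetrically, an $a$-cycle) that, having two distinct states entering it, breaks backward determinism. Given that lemma, the present theorem is essentially bookkeeping: one assembles the definitional inclusion and the witness. The one point I would double-check is that the fact \emph{every regular language is in $2detLIN$} is genuinely available from the preceding text, since the entire properness argument rests on $L_{ab}$ lying on the $2detLIN$ side of the boundary; everything else reduces to citing Lemma~\ref{reg-notrev} and Definition~\ref{def-rev}.
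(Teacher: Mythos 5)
Your proposal is correct and matches the paper's own argument exactly: the inclusion is definitional (every R-2H automaton is a D-2H automaton), and properness follows from Lemma~\ref{reg-notrev} together with the fact, stated in the sentence immediately preceding the theorem, that every regular language (hence $L_{ab}$) lies in $2detLIN$. Your only worry — whether that last fact is available — is resolved by that same sentence, so nothing is missing.
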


\begin{proof}
The inclusion comes directly from the definition, as each reversible 2-head automaton is also deterministic. The properness is a consequence of Lemma \ref{reg-notrev}.
\end{proof}

In the next sections we consider some specific variants and show that they can accept only subclasses of 2revLIN.
However, first, in the next subsection, some related models are recalled.

\subsection{Related Models}

First, we recall a related model from \cite{kutrib2023reversible} that we have already mentioned.

A reversible two-party Watson-Crick automaton (REV-PWK)  consists of two independent finite automata that process the same input string in opposite directions, communicating via messages to ensure synchronization and reversibility. There exist reverse transition functions $ \delta_i^- $ for each automaton and reverse broadcast functions $ \mu_i^- $. These functions ensure that every configuration has at most one predecessor, which can be computed using another two-party Watson-Crick system. The upper component ($ A_1 $) reads the input tape from left to right, while the lower component ($ A_2 $) reads the input tape from right to left. During a forward computation step, the upper component reads an input symbol and then moves its head to the next position, and the lower component moves its head first and then reads the input symbol. During a backward computation step, the reverse behavior occurs, the upper component either moves its head to the left or stays stationary, while the lower component either moves its head to the right or stays stationary. Two-party Watson-Crick automata are generally complex due to the need for communication and coordination between two independent components. This communication mechanism allows the automata to coordinate their actions, making the model more expressive than systems without communication. In contrast, a reversible 2-head automata, which uses a single automaton with two heads reading the input in opposite directions, are simpler because they involve a single automaton controlling both heads, as they lack the communication capability, limits its ability to handle certain languages. For instance, the language $ \{a^n b^m \mid n, m \geq 0\} $ can be accepted by a REV-PWK due to its synchronization abilities but cannot be accepted by our reversible 2-head automata as we have already shown.

We also recall other models that use the generative approach.

The definition of deterministic linear grammars, as referenced in \cite{LRDet,Ldet}, specifies two key properties. First, the left deterministic linear grammar does not include production rules where the right-hand side begins with a nonterminal, Additionally, for any given nonterminal $T$, the first terminal appearing on the right-hand side of a rule with $T$ as the left-hand side uniquely identifies the rule, and this terminal is always followed by a nonterminal. While the right deterministic linear grammar does not include production rules where the left-hand side begins with a nonterminal, and for any given nonterminal  $T$, the first terminal appearing on the left-hand side of a rule with $T$ as the rigt-hand side uniquely identifies the rule, and this terminal is always followed by a nonterminal.

We recall the definition of the former special class of linear languages which were used in \cite{LRDet,Ldet}, and show their relation to our reversible 2-head automata definition.

\begin{defi}\label{def-left}{\textbf{(Left Deterministic Linear Grammar)}}
A left deterministic linear grammar (LDLG) 
$$G = (N, V, S, P)$$
is a linear grammar where all rules are of the form $T\to a T^{\prime} u$ or $T \to \lambda$, where for each $T, T^{\prime}, T^{\prime \prime}\in N$, $u, v \in V^*$ and $a \in V$, if $T\to a T^{\prime}u, \text{~and~} T \to a T^{\prime \prime}v$ are in $P$, then $T^{\prime}= T^{\prime \prime} \text{~and~} u=v.$
\end{defi}

The languages generated by LDLG 
are called left deterministic linear language (LDLL). 

Note that, in fact, the symmetric form called right deterministic linear grammars/languages (RDLG/ RDLL) are also defined in \cite{LRDet,Ldet}. 
Due to the symmetry of these grammars RDLG and LDLG,  the reversal of any language of one of these classes is in the other class.
Because of lack of space we do not go into further details about them.

\section{Reversible Automata Consuming the Input Letterwise}

In this section, we consider reversible 2-head automata with limitation on the number of heads that can move.
Thus, we continue by defining a specific subset of the class R-2H automata. We have already mentioned 1-limited variants of 2-head automata, now we define formally also for the reversible case. 

\begin{defi} 
Let $A$ be a (deterministic/reversible) 2-head automaton. 
If $A$ has the property that
every transition is of the form $\delta(q,(a,\lambda))$ or every transition is of the form $\delta(q,(\lambda,b))$ ($a,b\in V$), then  $A$ is a 1-limited (deterministic/reversible) 2-head automaton.

The class of 1-limited reversible 2-head automata is denoted by R1-2H,
while the class of languages accepted by 1-limited reversible 2-head automata is denoted by 2rev1LIN. 
\end{defi}
In fact the above definition states that for each state $q$ of the automaton for any (not necessarily distinct) letters $a,b\in V$, $\delta(q,(a,b)) = \emptyset$ always hold.

Now let us consider these 1-limited reversible 2-head automata, where in each computation step exactly one input letter is being processed. We can classify the states of the automaton as 
Table \ref{tab:stateClass} shows the possible properties. In fact, we can state this classification as a characterization result. 

\begin{table}
    \caption{Classifying states in a reversible 1-limited 2-head automata. The first index indicates which head was used to reach the given state, the second index shows which head is allowed to read in transition(s) from the given state, where $\to$ represents the first, $\leftarrow$ represents the second head and $\emptyset$ represents the case when there is no incoming or outgoing transition from/to the state (which may happen at an initial or a final state, respectively) The trivial one-state automaton without any transitions is not considered.}
    \label{tab:stateClass}
    \centering
    \begin{tabular}{l|cc|l}
  to $ \ \ \ \ \ \ \ \ \ \ \backslash\ \ $ from $q$  & first head & second head & \ \ NO HEAD \\ \hline
  first head  &  $Q_{\to,\to}$   &  $Q_{\to,\leftarrow}$ & $Q_{\to,\emptyset}$ \ \ \ (final\\
  second head       & $Q_{\leftarrow,\to}$   &  $Q_{\leftarrow,\leftarrow}$ & $Q_{\leftarrow,\emptyset}$ \ \ \ \ state)\\ \hline
    NO HEAD       & $Q_{\emptyset,\to}$ & $Q_{\emptyset,\leftarrow}$ & \\ 
    & (inital & state) &\\
    \end{tabular}
 \end{table}

\begin{thm}\label{thm:character}
Let $M$ be a 
1-limited reversible 2-head automaton  
accepting a nonempty language. Then its set of states can be written as the union of at most 7 disjoint classes according to Table \ref{tab:stateClass}. 
\\ The initial state $q_0$ can be any of the following 6 classes: $$ Q_{\to,\to},  Q_{\to,\leftarrow},  Q_{\leftarrow,\to},  Q_{\leftarrow,\leftarrow}, Q_{\emptyset,\to} , Q_{\emptyset,\leftarrow},$$ depending on if $q_0$ can be part of any configurations other than starting ones. Each accepting state can be in of the following classes:  $$Q_{\to,\to}, Q_{\to,\leftarrow}, Q_{\to,\emptyset}, Q_{\leftarrow,\to},  Q_{\leftarrow,\leftarrow}, Q_{\leftarrow,\emptyset}.$$ Every other state $q\in Q\setminus F\setminus\{q_0\}$ can be in one of the four classes:
$$ Q_{\to,\to},  Q_{\to,\leftarrow},  Q_{\leftarrow,\to},  Q_{\leftarrow,\leftarrow}.$$
\end{thm}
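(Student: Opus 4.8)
The plan is to attach to each state two well-defined labels — an \emph{incoming} head and an \emph{outgoing} head — so that sending a state to its pair of labels is precisely the assignment into one cell of Table~\ref{tab:stateClass}, and the resulting classes are pairwise disjoint by construction. The two labels come from two different hypotheses: determinism pins down the outgoing head, while reversibility pins down the incoming head. First I would fix the outgoing label. Since $M$ is $1$-limited, every transition leaving a state $q$ has the form $(a,\lambda)$ or $(\lambda,b)$. If $q$ had both an outgoing first-head transition $\delta(q,(a,\lambda))\neq\emptyset$ and an outgoing second-head transition $\delta(q,(\lambda,b))\neq\emptyset$, the determinism condition — that the presence of a transition $(a,\lambda)$ forces the first-head component of every other transition from $q$ to be a nonempty letter — would be violated, because the second-head transition has first-head component $\lambda$. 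Hence all transitions leaving $q$ use the same head, and $q$ receives a unique outgoing label in $\{\to,\leftarrow\}$, or the label $\emptyset$ when it has no outgoing transition.

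The crux is the incoming label, and this is exactly where reversibility is needed rather than mere determinism. Suppose a state $p$ were entered by two transitions using different heads, say $\delta(q_1,(a,\lambda))=p$ and $\delta(q_2,(\lambda,b))=p$. Applying the reversibility characterization (Theorem~\ref{char-rev}) to this pair of transitions with common target $p$, the required inequality $|bc|\geq 1$ fails: the second-head component of the first transition and the first-head component of the second transition are both $\lambda$, so the corresponding product is empty. Thus no state can be entered simultaneously by a first-head and a second-head transition, and every state carries a unique incoming label in $\{\to,\leftarrow\}$, or $\emptyset$ when it has no incoming transition.

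Combining the two labels places every state in exactly one of the nine cells of Table~\ref{tab:stateClass}; the cell $(\emptyset,\emptyset)$ is empty in any automaton possessing a transition, which leaves the eight listed classes, disjoint by construction. To sharpen this to seven I would invoke reachability: every reachable state other than $q_0$ has a predecessor, so the incoming label $\emptyset$ can only occur at $q_0$, whence at most one of $Q_{\emptyset,\to},Q_{\emptyset,\leftarrow}$ is nonempty and the number of nonempty classes is at most $6+1=7$. For the initial state, a nonempty language together with the exclusion of the trivial one-state machine forces $q_0$ to have an outgoing transition, so its second label is not $\emptyset$, giving the six stated classes. For an accepting state $f$, the terminating configuration $(f,\lambda)$ of an accepting computation on a nonempty word is reached by a transition into $f$, so its incoming label is not $\emptyset$, again giving the six stated classes. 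Finally, a state that is neither initial nor final has a predecessor by reachability and, after discarding useless trap states (equivalently, restricting to states lying on some accepting computation), a successor, so both labels lie in $\{\to,\leftarrow\}$ and it falls into one of the four interior classes.

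The main obstacle I expect is the incoming-label step: it is the only place where backward determinism genuinely enters, and one must carefully match the $(a,\lambda)$ and $(\lambda,b)$ patterns against the $|ad|\geq 1,\ |bc|\geq 1$ bookkeeping of Theorem~\ref{char-rev}. The remaining work — the counting down to seven and the initial/final edge cases, in particular a state that is simultaneously initial and accepting, which is governed by the initial-state clause — is routine once the two labels are shown to be well defined.
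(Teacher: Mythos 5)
Your proof is correct and follows essentially the same route as the paper's: determinism forces a unique outgoing head per state, backward determinism forces a unique incoming head, and reachability/usefulness together with the observation that only $q_0$ can lack a predecessor (so at most one of $Q_{\emptyset,\to},Q_{\emptyset,\leftarrow}$ is nonempty) yields the count of seven and the case analysis for initial, accepting, and ordinary states. The only cosmetic difference is that you obtain incoming-label uniqueness by invoking Theorem~\ref{char-rev} in its converse direction (the paper states that theorem only as a sufficient condition), whereas the paper's proof argues this step directly from the definition of backward determinism; the underlying fact is identical.
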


\begin{proof} We assume that all the states of $Q$ are reachable and useful. Thus, let us start the investigation by ordinary states, i.e., if $q\in Q\setminus ( F\cup \{q_0\})$.
  Then, as $q$ is reachable there is a transition of the form $q\in\delta(p,(c,d))$, where exactly one of $c$ and $d$ is a letter of the alphabet. To make sure that $M$ is reversible, then all transitions of the form $q\in\delta(p',(e,f))$ must also be a similar type, i.e., the same head must be used. Moreover, if the read letter is the same, than the same state $p=p'$ must be, otherwise $M$ would not be backward deterministic.
   Thus, the direction of the first arrow in the index, is already specified by this/these transition(s). Since $q$ is useful, there is an accepting computation that contains it in a configuration. Thus, there are also transitions from $q$: there is a state $r \in\delta(q,(a,b))$ with some pairs such that either $a$ or $b$ is an input letter. However, then, each transition from state $q$ must use the same head for reading, otherwise $M$ would not be deterministic. Thus, the second arrow direction is also fixed.
   Notice that in the same automaton all the four possible states may occur.

   Now, considering the initial state, if there is a transition to it, then exactly one of the above conditions apply. 
      However, if there is no transition to the initial state, then depending on which head is allowed to read in transitions from the initial state, it is in one of the $Q_{\emptyset,\to} , Q_{\emptyset,\leftarrow}$ categories. However, no other reachable state can be in any of these categories, therefore, at most one of these categories may appear in an automaton $M$, but not both at the same time.

      Finally, considering any of the accepting states, let us say $q\in F$: either there is at least one transition defined from $q$, and therefore it is in one of the four standard categories; or if no transition is defined from $q$, then it can be in one of the  $Q_{\to,\emptyset},  Q_{\leftarrow,\emptyset}$
      categories. Notice that, as $M$ may have many accepting states, in an automaton both of these special categories may occur.

 Thus, we can see, that $M$ as a 1-limited reversible 2-head automaton, can have at most 7 disjoint classes of states. 
\end{proof}

Of course, the possible transitions are also constrained based on the categories of the states.
From a state where the second arrow direction is $\ell \in \{\rightarrow,\leftarrow\}$ all transitions are going to states where the first arrow is also $\ell$. 

To make complete the characterization, we also state the following:

\begin{thm}
    A 1-limited 2-head automaton $M$  is reversible if and only if it has the following properties.
    \begin{itemize}
\item The states of $M$ can be classified according to the classes of Theorem \ref{thm:character}.
\item $M$ is deterministic: for each state $q$ and each letter $a$, there is at most one transition, i.e., \\ $|\delta(q,(a,\lambda))| \leq 1$ for states where the first head is allowed to read, and $|\delta(q,(\lambda,a))| \leq 1$ for states where the second head is allowed to read.  
    \item $M$ is backward deterministic, i.e., for each state $q$ and each letter $a$, there is at most one transition to arrive to $q$, i.e., there is at most one state $p$ such that $q \in \delta(p,(a,\lambda)) $ for state $q$ that can be reached by transition(s) in which the first head is allowed to read, and there is at most one state $p$ such that $q \in \delta(p,(\lambda,a)) $ for a state $q$ that can be reached by transition(s) in which the second head is allowed to read.
    \end{itemize}
\end{thm}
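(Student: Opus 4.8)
The plan is to prove both directions of the biconditional, treating the three stated properties together as the characterizing condition. For the forward direction, I would assume $M$ is a reversible 1-limited 2-head automaton and derive each of the three bullets in turn. The first bullet—that the states admit the classification—is essentially Theorem \ref{thm:character}, which I may invoke directly; the key observation I would restate is that reachability plus usefulness force each state to have a well-defined ``incoming head'' and ``outgoing head,'' so the classification is exhaustive. The second and third bullets are immediate from unwinding the definition in Definition \ref{def-rev}: determinism gives at most one successor configuration and backward determinism gives at most one predecessor, and since the automaton is 1-limited, ``configuration'' information reduces to ``state plus which single letter is read by which head.'' I would phrase this reduction carefully: in a 1-limited machine a transition from $q$ reading $a$ is recorded as either $\delta(q,(a,\lambda))$ or $\delta(q,(\lambda,a))$, never a genuine pair, so the bounds $|\delta(q,(a,\lambda))|\le 1$ (and the backward analogue) are exactly the specialization of the general determinism/backward-determinism conditions to this setting.

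For the converse direction, I would assume the three properties hold and show $M$ is reversible, i.e. both deterministic and backward deterministic in the sense of Definition \ref{def-rev}. Determinism: given a nonempty configuration $(q,w)$, the classification tells me which single head is allowed to read from $q$, say the first, so only transitions of the form $\delta(q,(a,\lambda))$ are available; the letter $a$ is then forced to be the current leftmost symbol of $w$, and the second bullet guarantees at most one such transition, hence at most one successor configuration. Backward determinism: given a configuration $(q',w')$, the classification tells me which head was used to \emph{reach} $q'$; say the first head, so any predecessor arises from a transition $\delta(p,(a,\lambda))=q'$ with $w = aw'$. The letter $a$ that was last read is recoverable because in the 1-limited setting the processed portion of the input is determined by the configuration, and the third bullet guarantees at most one state $p$ with $q'\in\delta(p,(a,\lambda))$. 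Thus $(q',w')$ has at most one predecessor, matching the condition of Theorem \ref{char-rev} specialized to the 1-limited case.

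The main obstacle I anticipate is not any single computation but rather making the ``which head'' bookkeeping airtight, so that the global classification of Theorem \ref{thm:character} correctly interfaces with the local per-state determinism conditions. Specifically, I must be sure that the second bullet's quantifier ``for states where the first head is allowed to read'' lines up with the second-index of the classification, and that the third bullet's ``for a state $q$ that can be reached by transitions in which the first head is allowed to read'' lines up with the first-index. The subtle point is that backward determinism could in principle be violated by two transitions into $q'$ that use \emph{different} heads; I would argue this cannot happen precisely because the classification assigns $q'$ a single incoming-head direction, which is exactly what the first bullet supplies. Once this consistency between the state classes and the transition constraints is established, each direction reduces to a short application of the definitions together with Theorem \ref{char-rev}, and I would close by remarking that the three properties are mutually independent enough that all three are genuinely needed.
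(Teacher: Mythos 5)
Your proof is correct and takes essentially the same approach as the paper, whose entire proof is the single sentence that the theorem ``is the consequence of the definition and the previous characterization result'' (i.e., Definition~\ref{def-rev} and Theorem~\ref{thm:character}). Your write-up supplies exactly the details the paper leaves implicit, including the genuinely important point that the classification's incoming-head index is what rules out two predecessors of a state via \emph{different} heads, which is the only non-trivial interface between the classification and backward determinism.
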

\begin{proof}
This theorem is the consequence of the definition and the previous characterization result.    
\end{proof}

Now, we intend to show that although the class of 1-limited variant of D-2H is able to accept all languages in 2detLIN, for reversible automata the 1-limited variant cannot be used as a normal form, as they are weaker than the variants without such restriction.

\begin{lem}\label{lem:wcbn}
    The language $L_{wcb^n}=\{wcb^n~|~w\in\{a,b\}^*,~|w|_b = n\}$ is not accepted by any R1-2H automaton. 
\end{lem}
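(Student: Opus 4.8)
The plan is to derive a contradiction from the assumption that some R1-2H automaton $M$ accepts $L_{wcb^n}$, by exploiting the fact that in the 1-limited model every computation step consumes exactly one letter with exactly one head, so the total number of steps equals the input length, and the two heads together partition the input into a prefix (read by the first head, left to right) and a suffix (read by the second head, right to left). I would fix a large $n \gg |Q|$ and consider the input word $u = b^n c b^n \in L_{wcb^n}$ (here $w = b^n$, which has exactly $n$ occurrences of $b$, so $|w|_b = n$ as required). The central observation is that, because the only letter ever appearing to the right of $c$ is $b$, the second head reads a block of $b$'s until it either reaches $c$ or meets the first head; during this stretch $M$ must essentially count the number of trailing $b$'s, but reversibility forbids the very cycles a finite automaton would need to do so.

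First I would argue, using the classification from Theorem~\ref{thm:character} together with the reversibility characterization in Theorem~\ref{char-rev}, that no state can be entered by two distinct predecessors on the same letter. Then I would locate a long run of consecutive steps in the accepting computation on $u$ during which the second head consumes trailing $b$'s (say the last $n$ symbols). Since $n$ exceeds $|Q|$, by the pigeonhole principle two of the configurations reached after reading trailing $b$'s share the same state $q''$, so $M$ has a cycle of states all reached by the second head reading $b$. As in the proof of Lemma~\ref{reg-notrev}, I would pick the state in this cycle that is reachable with the longest remaining suffix, and observe that it is reached by reading a $b$ with the second head from at least two distinct states, directly violating backward determinism. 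This is the same engine that drove the proof of Lemma~\ref{reg-notrev}, adapted to the 1-limited setting.

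The step I expect to be the main obstacle is ruling out the escape route in which $M$ avoids a $b$-reading cycle on the second head by instead advancing the \emph{first} head through the leading $b^n$ block and through the $c$, thereby crossing the midpoint before the second head has to count far. Here is where the membership condition $|w|_b = n$ must be used carefully: the automaton cannot know in advance how many $b$'s precede the $c$, nor how many follow it, and it has no way to compare the count of trailing $b$'s against the count of $b$'s in $w$ without storing unbounded information. Concretely, I would compare the computations on $b^n c b^n$ and on $b^{n'} c b^n$ for $n \ne n'$ (note $b^{n'}cb^n \notin L_{wcb^n}$ when $n' \ne n$ since then $|w|_b = n' \ne n$) and show that, whichever head is responsible for distinguishing the trailing count, a pigeonhole-induced repetition of states forces either a forbidden cycle (breaking reversibility) or an indistinguishability that makes $M$ accept a word outside $L_{wcb^n}$. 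Making this case analysis exhaustive over the four ordinary state classes $Q_{\to,\to}, Q_{\to,\leftarrow}, Q_{\leftarrow,\to}, Q_{\leftarrow,\leftarrow}$, and showing that in every branch the requirement to match $|w|_b$ against the trailing $b$-count collides with the finite-state reversibility constraint, is the delicate part of the argument.
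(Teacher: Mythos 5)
The paper itself never proves Lemma~\ref{lem:wcbn} (it is one of the proofs omitted for the page limit), so your attempt can only be judged on its own merits, and judged that way it has a fatal gap. Every word you analyze ($b^ncb^n$ and $b^{n'}cb^n$) lies in $\{b,c\}^*$, but the restriction of $L_{wcb^n}$ to $a$-free words is exactly $\{b^mcb^m \mid m\geq 0\}$, and that language \emph{is} accepted by a 1-limited reversible 2-head automaton: take states $q_0,q_1,f$ with $\delta(q_0,(b,\lambda))=q_1$, $\delta(q_1,(\lambda,b))=q_0$, $\delta(q_0,(c,\lambda))=f$ and $F=\{f\}$. This automaton is 1-limited, deterministic, and backward deterministic (every state has exactly one incoming transition); it accepts every $b^mcb^m$ and rejects every $b^xcb^y$ with $x\neq y$. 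Consequently, the constraints you actually use, namely that a hypothetical R1-2H automaton accepts $b^ncb^n$, rejects $b^{n'}cb^n$ for $n'\neq n$, and is reversible and 1-limited, are simultaneously satisfiable, so no case analysis over these words alone can ever terminate in a contradiction. A correct proof must exploit words containing the letter $a$: intuitively, the machine must be able to absorb unboundedly many $a$'s occurring anywhere inside $w$ while the $b$-matching cycle forces states that are entered by head-2 $b$-reads, and it is the collision of these two requirements with the entry-head uniformity of Theorem~\ref{thm:character} (all incoming transitions of a state must use the same head, with at most one predecessor per letter) that actually destroys reversibility. Your proposal never touches $a$ at all, so it cannot reach this obstruction.

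Two further problems, visible already on the $a$-free fragment. First, the ``long run of consecutive steps during which the second head consumes trailing $b$'s'' need not exist: the automaton above alternates heads at every step, and such alternation, rather than a one-head run, is precisely how a reversible machine matches the two counts, so the pigeonhole step you build on has nothing to bite on. Second, even when a one-head run and hence a repeated state does occur, a cycle by itself does not violate backward determinism (the two-state cycle $q_0\leftrightarrow q_1$ above is perfectly reversible); a violation arises only when a cycle state is entered from \emph{outside} the cycle by the same letter and the same head used inside it, and the residual case in which the cycle contains the initial state, entered at step $0$ with no predecessor, must be disposed of by a separate pumping/determinism argument rather than by backward determinism. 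Your appeal to the engine of Lemma~\ref{reg-notrev} silently skips exactly this case, which is the one that genuinely occurs for the matching cycle of this language.
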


We can now conclude the following hierarchy result:
\begin{thm}
    The languages accepted by R1-2H is a proper subclass of the languages accepted by R-2H: $$2rev1LIN \subsetneq 2revLIN .$$ 
\end{thm}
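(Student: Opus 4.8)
The plan is to establish the inclusion and the properness separately, exploiting the two lemmas and theorems already in hand. For the inclusion $2\text{rev}1\text{LIN} \subseteq 2\text{rev}\text{LIN}$, I would argue directly from the definitions: every R1-2H automaton is by definition an R-2H automaton, since the 1-limited restriction only further constrains the transition function (forbidding transitions of the form $\delta(q,(a,b))$ with both $a,b \in V$) without relaxing either determinism or backward determinism. Hence any language accepted by a 1-limited reversible 2-head automaton is accepted by some reversible 2-head automaton, giving the containment immediately.

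For the properness, the strategy is to exhibit a witness language that lies in $2\text{rev}\text{LIN}$ but not in $2\text{rev}1\text{LIN}$. The natural candidate is the language $L_{wcb^n} = \{wcb^n \mid w \in \{a,b\}^*,\ |w|_b = n\}$ from Lemma~\ref{lem:wcbn}, which is already proven \emph{not} to be accepted by any R1-2H automaton. Therefore the only remaining obligation is to show that $L_{wcb^n} \in 2\text{rev}\text{LIN}$, i.e., to construct (or argue the existence of) an unrestricted reversible 2-head automaton accepting it. The intended construction reads using \emph{both} heads simultaneously: the second head scans the trailing block of $b$'s from the right while the first head reads across $w$ from the left, so that each step consumes one letter from $w$ and, whenever that letter is a $b$, is matched against a $b$ from the right end via a paired transition of the form $\delta(q,(b,b))$, while a letter $a$ from $w$ is read together with a $b$ from the right as $\delta(q,(a,b))$. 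The central symbol $c$ and the synchronization at the meeting point of the heads must be handled so that the count $|w|_b = n$ is enforced exactly, and crucially the transitions must be set up to satisfy the reversibility criterion of Theorem~\ref{char-rev} (no two transitions into the same state with a coincident head-letter pattern).

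The hard part will be verifying that the constructed automaton is genuinely reversible, not merely deterministic: one must check that the backward-deterministic condition of Theorem~\ref{char-rev} holds, meaning that distinct incoming transitions to any state differ in the letter read by the relevant head, and that $|ad| \geq 1$ and $|bc| \geq 1$ for every pair of transitions sharing a target. This is precisely where the use of both heads at once pays off, since pairing an $a$ (or a $b$) of $w$ with a $b$ from the right allows the automaton to encode which branch was taken in the head configuration, information that a 1-limited machine cannot retain and which is exactly the feature that Lemma~\ref{lem:wcbn} shows a 1-limited reversible automaton lacks. Once the witness is placed on both sides of the divide, the strict inclusion $2\text{rev}1\text{LIN} \subsetneq 2\text{rev}\text{LIN}$ follows.
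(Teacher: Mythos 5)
Your skeleton is exactly the paper's: the inclusion follows from the definitions, and properness is witnessed by $L_{wcb^n}$, whose non-membership in $2rev1LIN$ is Lemma~\ref{lem:wcbn}; the only remaining obligation is to exhibit a reversible 2-head automaton accepting $L_{wcb^n}$. But the construction you sketch for that step is flawed: you pair \emph{every} letter of $w$ with a trailing $b$ --- a $b$ of $w$ via $\delta(q,(b,b))$ and an $a$ of $w$ via $\delta(q,(a,b))$. That enforces the count $n=|w|$, not $n=|w|_b$. Concretely, $ac\in L_{wcb^n}$ (take $w=a$, so $|w|_b=0$) can never be processed by your machine, because the step reading its $a$ demands a trailing $b$ that does not exist; conversely, in a word such as $abcbb\notin L_{wcb^n}$ your transitions happily match the $a$ and the $b$ of $w$ one-for-one against the two trailing $b$'s. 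No handling of the central $c$ or of the meeting point can repair this, since the miscount is built into the transitions themselves.

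The fix is to read the $a$'s of $w$ with the first head alone: keep the loop $(b,b)$, replace your $(a,b)$ by a loop $(a,\lambda)$, and add a transition $(c,\lambda)$ into a final state. This is precisely the paper's witness automaton: two states $q_0$ and $q_f$, loops at $q_0$ by $(b,b)$ and $(a,\lambda)$, and a transition $(c,\lambda)$ from $q_0$ to the final state $q_f$. Determinism holds because the three transitions out of $q_0$ are distinguished by the letter read by the first head, and reversibility follows from Theorem~\ref{char-rev}: the two transitions entering $q_0$ read different first-head letters ($b$ versus $a$) and satisfy the nonemptiness conditions $|ad|\geq 1$ and $|bc|\geq 1$, while $q_f$ has a unique incoming transition. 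Your closing intuition --- that a simultaneous two-head transition (here $(b,b)$) is exactly the capability that Lemma~\ref{lem:wcbn} shows 1-limited reversible machines lack --- is correct; it is only your assignment of the $a$-transitions that breaks the construction, and with it the membership half of the properness argument.
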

\begin{proof}
The inclusion comes directly from the definitions. The properness can be proven, on the one hand,
considering the R-2H automaton with two states as follows: at the initial state $q_0$ let there be two loop transitions: by $(b,b)$ and by $(a,\lambda)$. Further, let a transition by $(c,\lambda)$ from $q_0$ to $q_f$ which is the final state. Easy to see that this automaton is reversible.
This R-2H automaton accepts the language $$L_{wcb^n}=\{wcb^n~|~w\in\{a,b\}^*,~|w|_b = n\}.$$
On the other hand, as it was shown in Lemma \ref{lem:wcbn} this language is not in 2rev1LIN. 
\end{proof}

\section{Completely defined Reversible 2-Head Automata}  

In this section we investigate another subset of R-2H based on the already mentioned ``complete'' restriction.

\begin{defi}
A deterministic/reversible 2-head finite automaton is complete if for every possible configuration, there is exactly one transition step that can occur, if the input is not yet fully processed. In other words, a deterministic/reversible 2-head automaton is complete if and only if for every nonempty word $ w \in V^* \setminus \{\lambda\}$ and every state $ q \in Q$ there is exactly one pair $w^{\prime} \in V^*$ and $q^{\prime} \in Q$ such that the transition $(q,w) \Rightarrow (q^{\prime},w^{\prime})$ is valid.

The class of reversible complete 2-head automata is denoted by RC-2H, while the class of languages accepted by them is denoted as 2CrevLIN.
\end{defi}

Now, on the one hand, in \cite{10.1007/s00236-019-00362-6} it is proven that each language of 2detLIN can be accepted by a complete deterministic 2-head automaton. (Actually, for each language of 2detLIN there is also a  complete 1-limited deterministic 2-head automaton that accepts it.)
On the other hand, our aim, in this section, is to show that the property ``completeness'' is a real restriction for reversible 2-head automata.

\begin{prop}\label{RC&R1}
 There is no complete reversible 2-head finite automaton that has any transition of the form: $\delta(q, (a, b)) = q'$ {for some } $a, b \in V$.
 That is, no complete reversible 2-head automaton can have any transition where both heads move simultaneously.
 \end{prop}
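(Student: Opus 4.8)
The plan is to argue by contradiction, exploiting the fact that a transition reading both heads simultaneously consumes two letters at once and therefore cannot fire on a one-letter input, while completeness nonetheless demands a move on every one-letter input. So assume $M$ is a complete reversible 2-head automaton possessing a transition $\delta(q,(a,b)) = q'$ with $a,b \in V$. Since $M$ is reversible it is in particular deterministic, so the determinism characterization (conditions 1 through 4) is available; in fact the argument will use only determinism and completeness, which means the proposition already holds for complete deterministic automata.

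First I would examine the single-letter configuration $(q,a)$. A both-heads transition applies only to inputs of the form $\alpha w' \beta$ with $\alpha,\beta\in V$, hence of length at least two, so no such transition can fire on the one-letter word $a$; completeness then forces a one-head move, i.e. either $\delta(q,(a,\lambda)) \neq \emptyset$ or $\delta(q,(\lambda,a)) \neq \emptyset$. The first option is impossible: by condition 3 the presence of $\delta(q,(a,\lambda))$ forces every coexisting transition $(c,d)$ to satisfy $c \neq a$, contradicting $\delta(q,(a,b)) \neq \emptyset$, where $c = a$. Hence $\delta(q,(\lambda,a)) \neq \emptyset$. Symmetrically, inspecting the one-letter configuration $(q,b)$ and using condition 4 (which $\delta(q,(\lambda,b))$ would violate, since $\delta(q,(a,b))$ has second component $d = b$), I obtain that $\delta(q,(b,\lambda)) \neq \emptyset$.

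The contradiction then comes from confronting these two forced one-head transitions at the single state $q$. The transition $\delta(q,(b,\lambda))$ reads a letter with the first head, so condition 3 forces every transition coexisting at $q$ to have a non-$\lambda$ first component; but $\delta(q,(\lambda,a))$ has first component $\lambda$, which is the required contradiction. This one final step also covers the degenerate case $a = b$, so no case split on whether $a$ equals $b$ is needed, and the conclusion is that no such state $q$ with a both-heads transition can exist.

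The main obstacle I anticipate is not any calculation but getting the bookkeeping of conditions 3 and 4 exactly right: one must read each as the assertion that if a pure first-head (respectively second-head) move reading letter $a$ exists at $q$, then every other move at $q$ reads a non-$\lambda$, non-$a$ first (respectively second) component, and then apply it in the correct direction -- first to exclude the same-head reading that would clash with $(a,b)$, and finally to make the two surviving opposite-head transitions contradict one another. I would also state explicitly the elementary but essential observation that a both-heads transition cannot consume a single letter, since the entire argument hinges on completeness being forced to supply a genuinely new single-head move at $q$.
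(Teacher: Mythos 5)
Your proposal is correct and follows essentially the same argument as the paper: both proofs force a single-head transition at the one-letter configurations $(q,a)$ and $(q,b)$, rule out the same-head options by determinism (the clash with $\delta(q,(a,b))$), and derive the final contradiction between the two surviving opposite-head transitions $\delta(q,(\lambda,a))$ and $\delta(q,(b,\lambda))$. The only cosmetic differences are that the paper explicitly constructs inputs $uav$ and $ubv$ to witness that these configurations occur in actual computations, whereas you invoke completeness directly at $(q,a)$ and $(q,b)$ --- legitimate under the paper's definition, which quantifies over every state and every nonempty word --- and your observation that the final step absorbs the $a=b$ case (which the paper treats separately) is slightly cleaner.
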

\begin{proof}
Let $A= (Q, q_0, V, \delta, F)$ be a complete reversible 2-head finite automaton (and as we always assume with only states that are reachable).
We shall prove that if $ \delta(q, (a, b)) = q'$ for some $ a, b \in V $, then $A$ cannot be complete. Assume for contradiction that  
$ A$ has the transition: $\delta(q, (a, b)) = q'$. 

This means that in state $ q $, the first head can read an $ a $ and the second head reads a $ b $ at the same time while  the automaton switches its state to $ q'$.

Let us consider an input $uv$ such that state $q$ is reached from $q_0$ by reading $u$ with the first and $v$ with the second head, respectively. Such input exists by our assumption.
Then, we state that $A$ cannot completely read at least one of the original inputs $uav$ and $ubv$.
The computation on these inputs goes as $(q_0,uav) \Rightarrow^* (q,a)$ and $(q_0,ubv) \Rightarrow^* (q,b)$, respectively. 
Considering the first word, the current configuration is $(q,a)$. To be able to read $a$ in the next step of the computation $A$ must have at least one of the transitions $\delta(q, (a, \lambda)) \ne \emptyset$ or $(q, (\lambda, a)) \ne \emptyset$. 
Actually, as $A$ is deterministic, we can be sure that not both such transitions exist in $A$. Moreover, as we already assumed that
$\delta(q, a, b) \ne \emptyset$, thus
$\delta(q, (a, \lambda)) \ne \emptyset$ is impossible as it would also contradict to the deterministic behavior of $A$.
Thus, we can conclude that to allow to process the input $uav$ completely, $A$ must have 
$(q, (\lambda, a)) \ne \emptyset$.
\\
Now, let us consider the configuration $(q,b)$.
To make the input $ubv$ completely read by $A$, we need to able to read this $b$ at state $q$. For that we would need at least one of 
$\delta(q, (b, \lambda)) \ne \emptyset$ or $\delta(q, (\lambda, b)) \ne \emptyset$. As $A$ is deterministic, in fact, exactly one of those.
However, in case $\delta(q, (b, \lambda)) \ne \emptyset$, we have a contradiction: this with 
$\delta(q, (\lambda, a)) \ne \emptyset$ cannot be in a deterministic 2-head automaton.
\\
Further, the second transition $(q, (\lambda, b)) \ne \emptyset$ cannot be in $A$ as it would contradict to the deterministic behavior of $A$ since it has the transition  $\delta(q, (a, b)) \ne \emptyset$.

In a similar manner, it can also be seen that it is also impossible in a complete reversible automaton  if $a=b$, i.e., the transition in which both heads read is of the form $\delta(q,(a,a)) \ne \emptyset$. 

Therefore, we can conclude that a transition where both heads move requires an input of length greater than $2$. But completeness demands that the automaton must run on any nonempty input, including length $1$. Thus, such a transition cannot be used on all inputs, violating completeness.
\end{proof}

Thus, one may see that completeness automatically involve the
1-limited restriction.  

We have the following structural result about these automata.

\begin{thm} Let  $A$ be a complete 1-limited reversible 2-head automaton. 
    Then the graph of $A$ is strongly connected, i.e., there is a computation for each pair of states $p$ and $q$ such that if the {current} configuration has state $p$, then there is a (remaining) input such that by processing it, $A$ reaches a configuration with state $q$.
\end{thm}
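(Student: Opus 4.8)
The statement asserts precisely that the underlying transition graph $G$ of $A$ --- with vertex set $Q$ and a directed edge $p\to q$ for every transition $q\in\delta(p,(a,\lambda))$ or $q\in\delta(p,(\lambda,b))$ --- is strongly connected. So the plan is to prove strong connectivity of $G$ combinatorially and then turn an abstract path into a genuine two-head computation on a concrete input. The first ingredient is a degree count. By Proposition \ref{RC&R1} completeness forces $A$ to be $1$-limited, and by the characterization in Theorem \ref{thm:character} every state has a single, fixed reading head for its outgoing transitions. Completeness is required for every state $q$ and every nonempty word, hence for each letter $a\in V$ (take a word beginning, or ending, with $a$ according to the outgoing head of $q$) there must be a transition reading $a$, and determinism makes it unique. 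Thus every vertex of $G$ has out-degree exactly $|V|$; in particular there are no sink states, so the classes $Q_{\to,\emptyset}$ and $Q_{\leftarrow,\emptyset}$ are empty.

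Next I would pin down the in-degrees. Backward determinism says that for a state $q$ and a letter $a$ there is at most one predecessor reading $a$ (with the fixed incoming head of $q$), so each vertex has in-degree at most $|V|$. Now double count edges: the total number of transitions equals $\sum_q \mathrm{outdeg}(q)=|V|\cdot|Q|$ and also equals $\sum_q \mathrm{indeg}(q)$. Since each in-degree is at most $|V|$ and there are $|Q|$ states, the only way the sum can reach $|V|\cdot|Q|$ is for every in-degree to equal $|V|$ exactly. Hence $G$ is balanced: $\mathrm{indeg}(q)=\mathrm{outdeg}(q)=|V|$ at every vertex. In particular the initial state is not a source, so $Q_{\emptyset,\to}$ and $Q_{\emptyset,\leftarrow}$ are empty as well.

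The combinatorial heart is the classical fact that a balanced, weakly connected digraph is strongly connected, which I would prove by a cut argument. Fix a vertex $u$ and let $S$ be the set of vertices reachable from $u$ in $G$; by definition no edge leaves $S$. Summing $\mathrm{indeg}=\mathrm{outdeg}$ over $S$ shows the number of edges entering $S$ from outside also vanishes, so $S$ is separated from its complement in both directions. Since all states are assumed reachable from $q_0$, the graph $G$ is weakly connected, and a proper nonempty $S$ with no edges crossing its boundary is impossible; therefore $S=Q$. As $u$ was arbitrary, every state reaches every state, i.e., $G$ is strongly connected. I expect this double-counting-to-balanced step, together with the cut argument, to be the main obstacle, since the easy half (out-degree $=|V|$) must be upgraded through a global count before the Eulerian-type reachability argument applies.

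It remains to realize an abstract path $p=s_0\to s_1\to\cdots\to s_k=q$ as an actual computation. Each edge carries a letter and a head (the fixed outgoing head of its source state), so the steps using the first head read a fixed sequence of letters left-to-right and those using the second head read a fixed sequence right-to-left. I would therefore set the input to be the concatenation of the first-head letters (in order), followed by arbitrary padding, followed by the reverse of the second-head letters; with enough padding the two heads do not meet before the $k$-th step, so $A$ runs exactly along the chosen path and reaches a configuration in state $q$. This confirms the theorem's formulation in terms of inputs rather than merely graph paths.
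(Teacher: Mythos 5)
Your proof is correct and follows essentially the same route as the paper's: completeness plus determinism give out-degree exactly $|V|$ at every state, reversibility caps in-degree at $|V|$, double counting forces the graph to be balanced, and a cut argument (no edges leave the reachable set of a vertex, hence none enter it either) yields strong connectivity. Your additional final step, realizing an abstract path as an actual computation by concatenating the first-head letters, padding, and the reversed second-head letters, is a worthwhile detail that the paper's terser proof leaves implicit.
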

\begin{proof}
    For each state the number of defined transitions is exactly the same as the cardinality of the alphabet, let us say $n$. Since these automata are reversible, there is no state to which more than $n$ transition could be defined. However, in this case, there are exactly $n$ ``incoming'' transitions to each state. Thus, for any subset $Q'$ of states, the number of transitions coming from ``outside'' (i.e., from a state in $Q\setminus Q'$ to a state in $Q'$) must be the same as the number of transitions going outside from $Q'$ (i.e., transitions from a state in $Q'$ to a state in $Q\setminus Q'$). From this it follows that from any state $p$ that is reachable, there is also a computation to the initial state $q_0$.  
\end{proof}

\begin{lem}\label{lem:CCC}
The language $L_{ba}=\{b^na^n, b^{n+1}a^n ~|~ n\geq 0\}$ cannot be accepted by any complete reversible 2-head automaton.
\end{lem}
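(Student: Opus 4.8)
The plan is to exploit the rigidity that completeness and reversibility together force on the machine, and then specialise to the all-$b$ words, where the two-head dynamics collapse to a single permutation. First I would invoke Proposition~\ref{RC&R1}: a complete reversible 2-head automaton $A=(Q,q_0,V,\delta,F)$ accepting $L_{ba}$ reads exactly one letter per step, hence is 1-limited, and completeness guarantees that from every state there is a transition reading a $b$ (with that state's designated head). Thus the map $\pi_b\colon Q\to Q$ sending a state $q$ to the state reached from $q$ by reading one $b$ is a total function. It is moreover injective: if two distinct states led to a common state $q'$ by reading $b$, then $q'$ would have two incoming $b$-transitions, which Theorem~\ref{char-rev} forbids — two transitions into the same state reading the same letter $b$ cannot coexist, whether they use the same head (the read letters would have to differ) or different heads (the clause $|bc|\ge 1$ fails). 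Being an injective total map of a finite set, $\pi_b$ is a permutation of $Q$.

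The key step is then to track the computation on the words $b^n$. Writing $\Phi(w)$ for the halting state in which $A$ stops after consuming all of $w$ — well-defined and unique precisely because $A$ is complete, so every nonempty input is processed letterwise to the end in a determined way — I would claim $\Phi(b^n)=\pi_b^{\,n}(q_0)$. Indeed, on input $b^n$ every one of the $n$ steps reads a $b$, regardless of which head the current state uses; the head scheduling only selects which occurrence of $b$ is consumed, not the state change, so each step applies $\pi_b$. Since $\lambda\in L_{ba}$ we have $q_0\in F$. Now let $\ell\ge 1$ be the length of the $\pi_b$-cycle through $q_0$, so $\pi_b^{\,\ell}(q_0)=q_0\in F$ and hence $b^\ell\in L_{ba}$. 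If $\ell\ge 2$ this contradicts $b^\ell\notin L_{ba}$, since $L_{ba}\cap b^*=\{\lambda,b\}$; if $\ell=1$ then $\pi_b(q_0)=q_0$, so $\Phi(b^n)=q_0\in F$ for every $n$, forcing $bb\in L_{ba}$, again impossible. Either way no such $A$ can exist.

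The part needing the most care, and what I regard as the main conceptual obstacle, is recognising that restriction to $b^*$ linearises the whole argument: once $\pi_b$ is seen to be a permutation, the membership pattern $\{n : b^n\in L_{ba}\}=\{0,1\}$ must coincide with $\{n : \pi_b^{\,n}(q_0)\in F\}$, which is necessarily periodic with period $\ell$ and, containing $0$, must also contain $\ell,2\ell,\dots$ — incompatible with the finite, non-periodic set $\{0,1\}$. The supporting technical points I would be careful to state explicitly are that completeness leaves no dead final state (so $\pi_b$ really is total and $\Phi$ is defined on all inputs) and that reversibility is exactly what makes $\pi_b$ injective; these are the only places where the hypotheses of the lemma genuinely enter.
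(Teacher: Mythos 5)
Your proof is correct, but it follows a genuinely different route from the paper's. The paper argues about the \emph{first} computation step from $q_0$: completeness forces transitions on both letters from the starting head, and then it invokes its structural theorem that complete 1-limited reversible 2-head automata have strongly connected graphs --- reading an $a$ first (resp.\ continuing past the accepting state reached after reading the suffix $b$) would create a branch from which every continuation is non-accepting, contradicting strong connectedness. You instead restrict attention to the inputs $b^*$, observe that the one-$b$-step map $\pi_b$ is total (completeness plus determinism) and injective (backward determinism), hence a permutation, and conclude that $\{n : b^n \in L(A)\}$ must be purely periodic --- incompatible with $L_{ba}\cap b^* = \{\lambda, b\}$. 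The two arguments are cousins at heart (the paper's strong connectedness theorem is itself proved by the in-degree/out-degree counting that underlies your permutation observation), but yours is more self-contained: it needs only Proposition~\ref{RC&R1} and reversibility, not the strong connectedness theorem, and it yields a reusable criterion (for any RC-2H-acceptable language, membership of $b^n$ is periodic in $n$) that immediately kills any language whose trace on $b^*$ is finite and nonempty-but-not-everything. One small point of rigor: you cite Theorem~\ref{char-rev} to rule out two incoming $b$-transitions to a common state, but that theorem is stated in the paper only as a sufficient condition for reversibility, so strictly you are using its converse; the fact you need follows directly from the definition of backward determinism (two transitions into $q'$ both reading $b$, with the same or different heads, produce two predecessor configurations of $(q',\lambda)$ or of $(q', b w')$), so the fix is to argue from the definition rather than from that theorem.
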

\begin{proof}
  Let us assume that there is a complete reversible 2-head automaton $A$ that accepts $L_{ba}$. Then, in its initial state $q_0$ either head must start the process. 
  If the first head starts to read, then $A$ must have both transitions with $(a,\lambda)$ and $(b,\lambda)$ to ensure completeness. However, there is no word in $L_{ba}$ that starts with a letter $a$, thus by the former transition a state should be reached from which no accepting computation can be continued. However, this violates the strongly connected property of $A$.
  Thus, $A$ must start with the second head, by having both transitions with $(\lambda,a)$ and $(\lambda,b)$. This second should reach an accepting state, as $b\in L_{ba}$. However, as no other words having suffix $b$ are in the language, from this accepting state each continuation of the computation must not be accepting. As the automaton is complete, it should able to read any unread input, but again, not to accept any continuations would lead to a contradiction to the strongly connectedness of $A$.  
\end{proof} 

\begin{thm}
The class of all languages that can be accepted by reversible complete 2-head automata 
 (RC-2H) is a proper subset of 2revLIN. Furthermore, it is a proper subclass of the class accepted by 1-limited reversible 2-head automata:
 $$2CrevLIN \subsetneq 2rev1LIN.$$
\end{thm}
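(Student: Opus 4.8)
The plan is to prove the statement in two parts: first the inclusion $2CrevLIN \subseteq 2rev1LIN$, and then its properness by exhibiting an explicit separating language. For the inclusion I would simply invoke Proposition \ref{RC&R1}: since no complete reversible 2-head automaton can carry a transition in which both heads move, every transition of such a machine already has the form $\delta(q,(a,\lambda))$ or $\delta(q,(\lambda,b))$. Hence every RC-2H automaton is in particular a 1-limited reversible 2-head automaton, and passing to accepted languages yields $2CrevLIN \subseteq 2rev1LIN$ at once.

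For properness I would reuse the language $L_{ba}=\{b^na^n, b^{n+1}a^n \mid n\geq 0\}$ of Lemma \ref{lem:CCC} as the witness. That lemma already gives $L_{ba}\notin 2CrevLIN$, so it remains only to show $L_{ba}\in 2rev1LIN$ by constructing a 1-limited reversible (and necessarily incomplete) 2-head automaton for it. The natural candidate is the two-state machine $M$ with states $q_0$ (initial and final) and $q_1$ (final), and the single pair of transitions $\delta(q_0,(b,\lambda))=q_1$ and $\delta(q_1,(\lambda,a))=q_0$. Operationally $M$ strictly alternates heads: the first head consumes a leading $b$, then the second head consumes a trailing $a$, matching the $b$'s and $a$'s in pairs; acceptance at $q_0$ corresponds to a fully balanced word $b^na^n$, while acceptance at $q_1$ corresponds to one unmatched leading $b$, i.e.\ a word $b^{n+1}a^n$.

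I would then verify the required properties. The machine is 1-limited by inspection and deterministic, since each state has a unique outgoing transition and there are no $(\lambda,\lambda)$-moves. Reversibility is the easiest point: each of $q_0,q_1$ has exactly one incoming transition, with a fixed read letter and a fixed source state, so every configuration has a unique predecessor and $M$ is backward deterministic; alternatively this is immediate from the criterion of Theorem \ref{char-rev}. The resulting classes $q_0\in Q_{\leftarrow,\to}$ and $q_1\in Q_{\to,\leftarrow}$ are among those permitted for an initial/final state by Theorem \ref{thm:character}, and the incompleteness of $M$ (for instance $q_0$ has no move on a leading $a$) is exactly what is consistent with $L_{ba}\notin 2CrevLIN$.

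The main obstacle, and really the only nonroutine step, is the language equality $L(M)=L_{ba}$. I would argue it through the invariant that on input $b^ja^i$ each completed $b$-then-$a$ cycle returns $M$ to $q_0$ while shrinking the unread portion from $b^ja^i$ to $b^{j-1}a^{i-1}$. Iterating, a computation reaches $(q_0,\lambda)$ exactly when $j=i$ and reaches $(q_1,\lambda)$ exactly when $j=i+1$; in every other situation the computation gets stuck because the expected head sees the wrong letter, in particular whenever $j<i$ (the first head meets an $a$ at $q_0$), whenever $j>i+1$ (the second head meets a $b$ at $q_1$), or whenever the word is not of the form $b^\ast a^\ast$ at all. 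This establishes $L(M)=L_{ba}$, whence $L_{ba}\in 2rev1LIN\setminus 2CrevLIN$ and the inclusion $2CrevLIN \subsetneq 2rev1LIN$ is proper.
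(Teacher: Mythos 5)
Your proposal is correct and follows essentially the same route as the paper's proof: the inclusion via Proposition \ref{RC&R1}, the witness language $L_{ba}$ from Lemma \ref{lem:CCC}, and the same two-state automaton (a $b$-transition of the first head from $q_0$ to $q_1$ and an $a$-transition of the second head back to $q_0$, both states final). You merely verify the determinism, reversibility, and language-equality details explicitly where the paper calls them clear.
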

\begin{proof}
It is clear that RC-2H is a subset of R-2H, and thus 2CrevLIN is a subset of 2revLIN from the definitions. Moreover, RC-2H is also a subset of R1-2H by Proposition \ref{RC&R1}. The strict inclusion comes from Lemma \ref{lem:CCC} as the language $L_{ba}=\{b^na^n, b^{n+1}a^n ~|~ n\geq 0\}$ can clearly be accepted by a 1-restricted, but not complete reversible 2-head automaton with 2 states, let us say $q_0$ and $q$, having a transition from $q_0$ to $q$ by reading a $b$ with the first head, and having a transition from $q$ to $q_0$ by reading an $a$ with the second head. Both states are also final states.  
\end{proof}

Further, about the closure properties of the language class accepted by RC-2H automata we have the following.
\begin{thm}
The class 2CrevLIN of languages accepted by complete (1-limited) 2-head reversible automata is closed under complementation operation.
\end{thm}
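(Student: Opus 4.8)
The plan is to mimic the classical DFA complementation construction. Given a complete $1$-limited reversible $2$-head automaton $A=(Q,q_0,V,\delta,F)$ accepting a language $L$, I would define $A'=(Q,q_0,V,\delta,Q\setminus F)$, leaving the transition function and the initial state untouched and only swapping the accepting and non-accepting states. The first thing to establish is that $A'$ is still a legitimate RC-2H automaton. Since completeness and reversibility (determinism together with backward determinism, as characterized in Theorem \ref{char-rev}) are properties that refer only to the transition function $\delta$ and never to the set $F$, and since $A$ is automatically $1$-limited by Proposition \ref{RC&R1}, the automaton $A'$ inherits completeness, $1$-limitedness, determinism and backward determinism directly from $A$. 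Reachability of all states is likewise a property of $\delta$ and $q_0$, so it is preserved as well.

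The crucial observation is that in a complete $1$-limited automaton every input word is fully processed by a single computation ending in a uniquely determined state. Indeed, starting from $(q_0,w)$ with $|w|=n$, the $1$-limited property forces each step to consume exactly one input letter, so the length of the unread portion decreases by exactly one at each step; completeness guarantees that from every configuration with a nonempty word there is exactly one applicable transition, so the computation never blocks; and determinism guarantees this transition is unique. Hence after exactly $n$ steps the computation reaches a configuration $(q_w,\lambda)$ for a unique state $q_w$, and $w\in L(A)$ holds if and only if $q_w\in F$. For $w=\lambda$ the computation is trivial, $q_\lambda=q_0$, and membership is governed by $q_0\in F$.

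From this the result is immediate. Running $A'$ on any input $w$ produces the very same computation as $A$, since $\delta$ and $q_0$ coincide, and therefore $A'$ reaches the same terminal configuration $(q_w,\lambda)$. Consequently $w\in L(A')$ if and only if $q_w\in Q\setminus F$, i.e.\ $q_w\notin F$, i.e.\ $w\notin L(A)$, so $L(A')=V^*\setminus L(A)$, as required.

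The only point demanding care, and the step I expect to be the main obstacle, is justifying that every input really is consumed to completion and yields a well-defined terminal state; this is where the combination of completeness (no blocking) and the $1$-limited property (each step shortens the unread word by one, so there is neither divergence nor under-reading) is essential. Without completeness a rejecting computation could get stuck before reading all of $w$, and then simply flipping $F$ would fail to turn a "stuck/rejected" run into an accepting one. It is precisely the guarantee that the forward computation is total on all inputs that makes the naive state-swap correct here, exactly as for complete deterministic finite automata.
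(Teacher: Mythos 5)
Your proposal is correct and follows essentially the same route as the paper's own proof: swap $F$ for $Q\setminus F$, note that completeness/reversibility/1-limitedness depend only on $\delta$ and $q_0$ and are thus preserved, and use the fact that a complete deterministic automaton fully reads every input so that acceptance is decided solely by the (unique) terminal state. Your write-up merely spells out in more detail the step the paper states in one sentence, namely why every input yields a well-defined terminal configuration $(q_w,\lambda)$.
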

\begin{proof}
    As an RC-2H automaton $A=(Q,q_0,V,\delta,F)$ can fully read every input in a deterministic manner, the automaton $\bar{A}=(Q,q_0,V,\delta,Q\setminus F)$ accepts exactly the complement of $L(A)$. Since only the accepting states are changed $\bar{A}$ also belongs to the class RC-2H. 
\end{proof}

We close this section with another closure property that we have for each our new classes. 

\begin{thm}
Each of the classes 2detLIN, 2revLIN, 2rev1LIN and 2CrevLIN of languages, i.e., the classes accepted by deterministic 2-head automata, reversible 2-head automata, by 1-limited reversible 2-head automata and by complete (1-limited) 2-head reversible automata, respectively, is closed under reversal operation.
\end{thm}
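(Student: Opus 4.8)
The plan is to reduce every case to the single head-swapping construction already established in Lemma~\ref{reversal}. Given an automaton $A=(Q,q_0,V,\delta,F)$ lying in one of the four classes, form $A'=(Q,q_0,V,\delta',F)$ with $\delta'(q,(a,b))=\delta(q,(b,a))$ for all $q\in Q$ and $a,b\in V\cup\{\lambda\}$. By Lemma~\ref{reversal}, $L(A')=L(A)^R$. Since each of the four families is, by definition, the collection of languages accepted by automata of the corresponding type, it suffices to verify that the swap keeps $A'$ in the very same type as $A$; closure under reversal then follows at once. The whole argument rests on the observation that every defining condition of these classes is symmetric under exchanging the two head coordinates $(a,b)\leftrightarrow(b,a)$.

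\emph{Determinism, reversibility, and the $1$-limited restriction.} These three are local conditions and transfer immediately. The four clauses characterising determinism are invariant under the coordinate swap: the first two are symmetric in $(a,b)$, while the clauses governing $(a,\lambda)$- and $(\lambda,a)$-transitions are mirror images of each other, so $A'$ is deterministic whenever $A$ is, giving closure of 2detLIN. For reversibility, two transitions $\delta'(q_1,(a,b))=\delta'(q_1',(c,d))=q_2$ of $A'$ are exactly the transitions $\delta(q_1,(b,a))=\delta(q_1',(d,c))=q_2$ of $A$; applying the criterion of Theorem~\ref{char-rev} to the latter yields ``$a\neq c$ or $b\neq d$'' together with $|ad|\geq1$ and $|bc|\geq1$, which is precisely the same criterion for $A'$. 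Hence $A'$ is reversible and 2revLIN is closed. Finally, the swap sends each $(a,\lambda)$-transition to a $(\lambda,a)$-transition and vice versa, so a $1$-limited automaton is carried to a $1$-limited one, and 2rev1LIN is closed.

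\emph{Completeness.} This is the step I expect to be the main obstacle, since completeness is a global condition quantified over all nonempty inputs rather than a local condition on individual transitions. I would argue through a bijection between the moves available to the two automata. Fix a state $q$ and a nonempty word $w$ with first letter $a$ and last letter $b$; the moves applicable to $(q,w)$ in $A'$ are governed by the entries $\delta'(q,(a,\lambda))=\delta(q,(\lambda,a))$, $\delta'(q,(\lambda,b))=\delta(q,(b,\lambda))$, and (when $|w|\geq2$) $\delta'(q,(a,b))=\delta(q,(b,a))$. These are exactly the entries of $\delta$ that govern the moves applicable to the configuration $(q,w^R)$ of $A$, whose first letter is $b$ and last letter is $a$. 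Thus the swap is a bijection between the applicable moves of $A'$ from $(q,w)$ and those of $A$ from $(q,w^R)$; since $w^R$ is again a nonempty word and $A$ is complete, there is exactly one move in the latter set, hence exactly one in the former. The single-letter case $|w|=1$ causes no difficulty, as then $w^R=w$ and only single-head moves are available on both sides. As this holds for every $q$ and every nonempty $w$, the automaton $A'$ is complete; being reversible by the previous paragraph, $A'\in\textnormal{RC-2H}$, and 2CrevLIN is closed under reversal as well.
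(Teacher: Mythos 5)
Your proof is correct and takes essentially the same approach as the paper: the paper's entire proof is the observation that the head-swapping construction of Lemma~\ref{reversal} does not modify determinism, reversibility, 1-limitedness, or completeness. You simply verify in detail (including the bijection-of-applicable-moves argument for completeness) what the paper states as a one-line observation.
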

\begin{proof}
Observe that the construction used in Lemma \ref{reversal} does not modify any of the following  properties: determinism, reversibility, 1-limitedness and completeness. Therefore, if the original automaton has any of these properties, the automaton that accepts the reversal of the language has also the same restrictions.
\end{proof}

\section{Discussion and Summary}
In this section, we start  
with a relation among the language classes accepted by our 
deterministic/rever\-sible 2-head automata and those that are generated by left  
deterministic linear grammars (Definition~\ref{def-left}).  

\begin{thm}\label{thm:LD-2det}
    Every left deterministic linear language is accepted by deterministic 2-head automata, moreover, the inclusion $LDLL \subsetneq 2detLIN$ is proper.
\end{thm}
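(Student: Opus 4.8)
The plan is to prove the two assertions separately. For the inclusion $LDLL\subseteq 2detLIN$ I would give an explicit construction: from an LDLG $G=(N,V,S,P)$ build a deterministic 2-head automaton $A$ that simulates leftmost derivations, reading the terminals contributed by the leading symbols of the rules with the first head and the accumulated suffixes with the second head. Concretely, take one state $q_T$ for each $T\in N$, with $q_S$ initial, and declare $q_T$ final iff $T\to\lambda\in P$. For a rule $T\to aT'u$ with $u=b_1\cdots b_k$ I would install a transition $\delta(q_T,(a,\lambda))$ reading $a$ with the first head, followed by $k$ fresh intermediate states that read $b_k,b_{k-1},\dots,b_1$ with the second head (the second head scans right-to-left and $u$ sits at the right end of the remaining input), finally landing in $q_{T'}$. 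The left-determinism of $G$, namely at most one rule $T\to aT'u$ per pair $(T,a)$, guarantees that all outgoing transitions of $q_T$ carry distinct first-head letters, and each intermediate state has a unique outgoing transition, so $A$ is deterministic.

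I would then verify $L(A)=L(G)$ by putting accepting computations of $A$ in bijection with complete derivations of $G$: an application of $T\to aT'u$ corresponds precisely to the macro-step reading $a$ on the left and $u$ reversed on the right, and a derivation terminates by $T\to\lambda$ exactly when $A$ reaches the final state $q_T$ with empty remaining input. The only delicate point is that $A$'s strategy of continuing whenever input remains and accepting only when the input is empty agrees with $G$: since a $\lambda$-rule can close a derivation only once the middle part is already empty, whenever the remaining input is nonempty the automaton is forced to continue, and the continuing rule is uniquely fixed by the next first-head letter. This gives $LDLL\subseteq 2detLIN$.

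For properness I would exhibit a witness in $2detLIN\setminus LDLL$ by exploiting the left/right asymmetry of LDLG together with the reversal-closure of $2detLIN$ proved earlier. Let $L_0=\{c\,a^n b^n : n\ge 0\}\cup\{d\,a^n b^{2n}: n\ge 0\}$, which lies in LDLL via $S\to cS_1\mid dS_2$, $S_1\to aS_1 b\mid\lambda$, $S_2\to aS_2bb\mid\lambda$. By the inclusion just established $L_0\in 2detLIN$, and since $2detLIN$ is closed under reversal, $L_0^R=\{b^n a^n c: n\ge 0\}\cup\{b^{2n}a^n d: n\ge 0\}\in 2detLIN$ as well (indeed a deterministic 2-head automaton reads the distinguishing last letter $c$ or $d$ first with the second head, then verifies the appropriate ratio of $b$'s to $a$'s). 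It then remains to show $L_0^R\notin LDLL$.

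The hard part will be proving $L_0^R\notin LDLL$. The intuition is precisely the directional asymmetry: an LDLG commits to each rule on the leftmost unread terminal, so while scanning the long prefix of $b$'s it cannot yet see the final letter $c$ or $d$ that dictates whether it must ultimately enforce $\#b=\#a$ or $\#b=2\#a$. I would formalize this by a pumping/periodicity argument: assuming an LDLG $G$ for $L_0^R$, the forced rule choices while consuming a run $b^m$ with $m>|N|$ must revisit a nonterminal and contribute a periodic suffix, so the dependence of the accumulated suffix on the number of leading $b$'s is eventually affine; pumping the repeated block transforms a word $b^n a^n c$ into another word $G$ is obliged to generate whose $b$/$a$ counts violate both relations, contradicting $L(G)=L_0^R$. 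Making the bookkeeping precise, in particular accounting for the possibility that some of the $a$'s or the final letter are produced as suffixes rather than as leading terminals, is the technically delicate step of the argument.
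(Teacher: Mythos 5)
Your first half is correct and is the standard construction: one state per nonterminal, a first-head transition on the leading terminal of each rule $T\to aT'u$, then a chain of fresh states reading $u$ from its last letter back to its first with the second head, with $q_T$ final iff $T\to\lambda\in P$; left-determinism of $G$ translates exactly into the determinism conditions for 2-head automata, and accepting computations correspond bijectively to (deterministic) derivations. For comparison: the paper omits its proof of this theorem for page-limit reasons, but its diagram places a language $L_{ab+ac}$ (a union of two languages distinguishable only at one end of the word) in $2detLIN \setminus LDLL$, so your $L_0^R=\{b^na^nc \mid n\ge 0\}\cup\{b^{2n}a^nd \mid n\ge 0\}$ is exactly the same kind of witness; your detour through reversal-closure of $2detLIN$ is legitimate (the paper proves that closure), though a direct deterministic 2-head automaton for $L_0^R$ is also immediate.

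The gap is in the last step of your non-membership argument, and it is a real one: pumping the $c$-family alone cannot yield a contradiction. Suppose the forced $b$-trace of a hypothetical LDLG for $L_0^R$ enters a cycle of length $p$ whose rules contribute, per period, suffixes totalling exactly $p$ letters, all equal to $a$. Then pumping the derivation of $b^na^nc$ by one period produces $b^{n+p}a^{n+p}c$, which is again in $L_0^R$; nothing is violated. This loophole is not hypothetical: $\{b^na^nc \mid n\ge 0\}$ by itself \emph{is} an LDLL (take $S\to b\,T\,ac$, $T\to b\,T\,a \mid \lambda$, $S\to c\,F$, $F\to\lambda$), and this grammar realizes precisely such a rate-one cycle. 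So your claim that the pumped word ``violates both relations'' is false in general: a pumped $c$-word ends in $c$, only the relation $\#b=\#a$ is relevant to it, and that relation can survive pumping. The correct finish plays the two sub-families against each other: by left-determinism, the derivations of $b^na^nc$ and of $b^{2m}a^md$ follow the \emph{same} deterministic $b$-trace, hence share the same cycle length $p\ge1$ and the same per-period suffix block $B$ (which, for large inputs, must consist of $a$'s only, or be empty). Your case analysis (trailing letters consumed as leading terminals versus produced as suffixes) then ends, in every case, either in an immediate structural contradiction or in a rate constraint: the $c$-words force $|B|=p$, while the $d$-words force $2|B|=p$; these are incompatible since $p\ge 1$. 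With this two-family argument substituted for the single-word pumping claim, your witness and the overall proof go through.
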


We present various example languages in Fig. \ref{fig:linear-languages} (some of the proofs about them are left for the reader). 

\begin{prop} The regular language $L_{ab} = \{a^nb^m\}$ is clearly in LDLL.
\end{prop}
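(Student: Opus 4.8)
The plan is to exhibit an explicit left deterministic linear grammar generating $L_{ab}$ and to verify it meets Definition~\ref{def-left}. I would take $N=\{S,B\}$ with start symbol $S$, terminals $V=\{a,b\}$, and the productions $S\to aS$, $S\to bB$, $S\to\lambda$, $B\to bB$, and $B\to\lambda$. The idea is that $S$ stays in an ``$a$-phase'' while emitting leading $a$'s, and the single rule $S\to bB$ pays for the first $b$ in order to move to the ``$b$-phase'' nonterminal $B$, which then emits the remaining $b$'s.

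First I would check that the grammar has the required shape: every non-erasing rule is of the form $T\to aT'u$ (here in fact with $u=\lambda$), and the only erasing rules are $S\to\lambda$ and $B\to\lambda$, so no right-hand side begins with a nonterminal. Next I would verify the determinism condition: the two non-erasing rules with left-hand side $S$ begin with distinct terminals ($a$ and $b$), and $B$ has a unique non-erasing rule; hence whenever two rules share a left-hand side and the same first terminal they must coincide, which is exactly what Definition~\ref{def-left} demands. Finally I would confirm $L(G)=L_{ab}$ via the derivation pattern $S\Rightarrow^* a^nS$, followed either by $S\to\lambda$ (yielding $a^n$) or by $S\to bB\Rightarrow^* a^n b^m$ once $B$ emits $b^{m-1}$ and erases, so exactly the strings $a^n b^m$ with $n,m\ge 0$ are generated.

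The only real subtlety, and the point worth stating explicitly, is that the LDLG format forbids a production whose right-hand side starts with a nonterminal, so one cannot switch from generating $a$'s to generating $b$'s ``for free''; the transition must consume a terminal. Routing this switch through the dedicated rule $S\to bB$ resolves the issue while simultaneously preserving left-determinism, since the leading symbols $a$ and $b$ already disambiguate the two choices available at $S$. Everything else is a routine check, consistent with the word ``clearly'' in the statement.
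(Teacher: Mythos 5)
Your construction is correct: the grammar $S\to aS$, $S\to bB$, $S\to\lambda$, $B\to bB$, $B\to\lambda$ satisfies the LDLG format of Definition~\ref{def-left} (every non-erasing rule starts with a terminal followed by a nonterminal, and distinct rules with the same left-hand side start with distinct terminals) and generates exactly $\{a^nb^m \mid n,m\ge 0\}$. The paper gives no proof for this proposition (it is among the examples "left for the reader"), and your explicit grammar is precisely the routine argument the word "clearly" is gesturing at.
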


Further,
in this section, we summarize the results concerning the relations between the considered classes on Fig.~\ref{fig:linear-languages}. We have proven the proper hierarchy $$2CrevLIN \subsetneq 2rev1LIN \subsetneq 2revLIN \subsetneq 2detLIN.$$ Moreover, as we have seen, there are regular languages that are not in 2revLIN. However, still 2rev1LIN contains, e.g., the language of palindromes.

\begin{figure}[ht]
    \centering
    \begin{tikzpicture}[scale=0.8, every node/.style={transform shape}]
        
        \draw[blue, thick] (0,0) rectangle (17,10);
         \draw[brown, thick] (0.5,0.5) rectangle (15,9);

        \node[above right, brown] at (8, 9) {\(2\text{detLIN}\)};

\node[above right] at (1.4,7.6) {\(2revLIN\)};
        
        \draw[thick] (4,4.5) ellipse (3cm and 3.5cm); 
                
        \draw[red, thick] (4.5,4) ellipse (2cm and 1.5cm);
        \node[above, red] at (5.2,5.4) {\(2rev1LIN\)};
        
        \draw[green, thick] (3.9,4) ellipse (0.8cm and 0.8cm);
        \node[above, green] at (5.1,4.6) {\(2CrevLIN\)};
        
        \draw[violet,thick] (7,4.5) ellipse (3.5cm and 3.5cm);
        \node[violet,above right] at (9,7.2) {\(LDLL\)};
        
        \node[red] at (5.5,3.5) {\(L_{ba}\)};
        \node[red] at (2.8,4) {?};
        \node[black] at (1.8,4) {?};

         \node[green] at (4.1,4) {\(V^*\)};
         \node[green] at (3.34,4) {?};

        \node[violet] at (8,5) {\(L_{ab}\)};

        \node[orange] at (12,2) {\(L_{ab+ac}\)};
        \node[blue] at (16,7) {\(L_{1\lor3}\)};

        \node[text width=4cm, align=center] at (5.2,2.1) {
            \(L_{wcb^n}\)
        };
        
        \node[above, blue, text width=14cm, align=center] at (9,10) {LIN
        };
        
    \end{tikzpicture}
    \caption{Diagram of sublinear languages with  the example language $V^*$ and other examples from the paper.}
    \label{fig:linear-languages}
\end{figure}
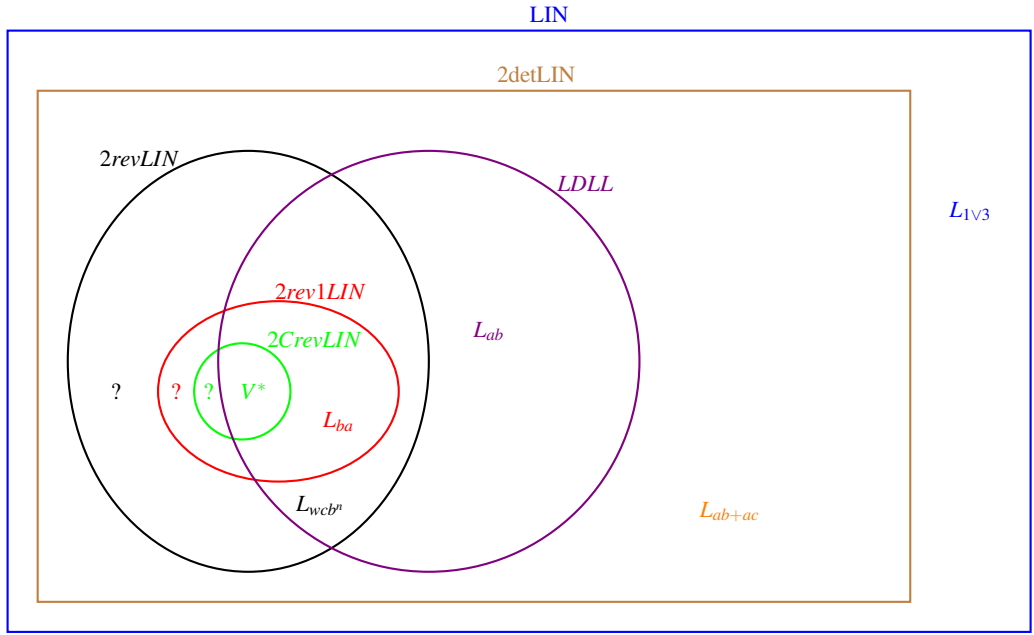

There are some open problems as well. 
There are some regions of the diagram where we do not put any example language (see ? marks in the figure). It is a future task to find examples to fit there, or to prove if some of these regions are empty.
Based on our examples, we conjecture that $2CrevLIN$ contains only regular languages. Decidability problems, computational and descriptional complexity issues regarding the new classes are also
 open and can be studied in the future.

\section*{Acknowledgments} The authors are very grateful to the reviewers for their comments.

\bibliographystyle{eptcs}
\bibliography{References}

\begin{thebibliography}{10}
\providecommand{\bibitemdeclare}[2]{}
\providecommand{\surnamestart}{}
\providecommand{\surnameend}{}
\providecommand{\urlprefix}{Available at }
\providecommand{\url}[1]{\texttt{#1}}
\providecommand{\href}[2]{\texttt{#2}}
\providecommand{\urlalt}[2]{\href{#1}{#2}}
\providecommand{\doi}[1]{doi:\urlalt{https://doi.org/#1}{#1}}
\providecommand{\eprint}[1]{arXiv:\urlalt{https://arxiv.org/abs/#1}{#1}}
\providecommand{\bibinfo}[2]{#2}

\bibitemdeclare{inproceedings}{LRDet}
\bibitem{LRDet}
\bibinfo{author}{Jorge \surnamestart Calera-Rubio\surnameend} \&
  \bibinfo{author}{Jose \surnamestart Oncina\surnameend}
  (\bibinfo{year}{2004}): \emph{\bibinfo{title}{Identifying Left-Right
  Deterministic Linear Languages}}.
\newblock In: {\slshape \bibinfo{booktitle}{Grammatical Inference: Algorithms
  and Applications ({ICGI} 2004)}}, {\slshape \bibinfo{series}{Lecture Notes in
  Artificial Intelligence (Subseries of Lecture Notes in Computer Science)}}
  \bibinfo{volume}{3264}, pp. \bibinfo{pages}{283--284},
  \doi{10.1007/978-3-540-30195-0_29}.

\bibitemdeclare{article}{ReversibleWatson}
\bibitem{ReversibleWatson}
\bibinfo{author}{Kingshuk \surnamestart Chatterjee\surnameend} \&
  \bibinfo{author}{Kumar~Sankar \surnamestart Ray\surnameend}
  (\bibinfo{year}{2017}): \emph{\bibinfo{title}{Reversible {W}atson---{C}rick
  automata}}.
\newblock {\slshape \bibinfo{journal}{Acta Inf.}}
  \bibinfo{volume}{54}(\bibinfo{number}{5}), p. \bibinfo{pages}{487–499},
  \doi{10.1007/s00236-016-0267-0}.

\bibitemdeclare{article}{EgeciogluHN11}
\bibitem{EgeciogluHN11}
\bibinfo{author}{{\"{O}}mer \surnamestart Egecioglu\surnameend},
  \bibinfo{author}{L{\'{a}}szl{\'{o}} \surnamestart Heged{\"{u}}s\surnameend}
  \& \bibinfo{author}{Benedek \surnamestart Nagy\surnameend}
  (\bibinfo{year}{2011}): \emph{\bibinfo{title}{Hierarchies of Stateless
  Multicounter $5' \to 3'$ {W}atson-{C}rick Automata Languages}}.
\newblock {\slshape \bibinfo{journal}{Fundam. Informaticae}}
  \bibinfo{volume}{110}(\bibinfo{number}{1-4}), pp. \bibinfo{pages}{111--123},
  \doi{10.3233/FI-2011-531}.

\bibitemdeclare{article}{HegedusNE12}
\bibitem{HegedusNE12}
\bibinfo{author}{L{\'{a}}szl{\'{o}} \surnamestart Heged{\"{u}}s\surnameend},
  \bibinfo{author}{Benedek \surnamestart Nagy\surnameend} \&
  \bibinfo{author}{{\"{O}}mer \surnamestart Egecioglu\surnameend}
  (\bibinfo{year}{2012}): \emph{\bibinfo{title}{Stateless multicounter $5' \to
  3'$ {W}atson-{C}rick automata: the deterministic case}}.
\newblock {\slshape \bibinfo{journal}{Nat. Comput.}}
  \bibinfo{volume}{11}(\bibinfo{number}{3}), pp. \bibinfo{pages}{361--368},
  \doi{10.1007/S11047-011-9290-9}.

\bibitemdeclare{inproceedings}{Ldet}
\bibitem{Ldet}
\bibinfo{author}{Colin \surnamestart de~la Higuera\surnameend} \&
  \bibinfo{author}{Jose \surnamestart Oncina\surnameend}
  (\bibinfo{year}{2002}): \emph{\bibinfo{title}{Inferring Deterministic Linear
  Languages}}.
\newblock In: {\slshape \bibinfo{booktitle}{Computational Learning Theory
  ({COLT} 2002)}}, {\slshape \bibinfo{series}{Lecture Notes in Artificial
  Intelligence (Subseries of Lecture Notes in Computer Science)}}
  \bibinfo{volume}{2375}, pp. \bibinfo{pages}{185--200},
  \doi{10.1007/3-540-45435-7_13}.

\bibitemdeclare{book}{Hopcroft1979IntroductionTA}
\bibitem{Hopcroft1979IntroductionTA}
\bibinfo{author}{John~E. \surnamestart Hopcroft\surnameend} \&
  \bibinfo{author}{Jeffrey~D. \surnamestart Ullman\surnameend}
  (\bibinfo{year}{1979}): \emph{\bibinfo{title}{Introduction to Automata
  Theory, Languages and Computation}}.
\newblock \bibinfo{publisher}{Addison Wesley}.
\newblock \urlprefix\url{https://api.semanticscholar.org/CorpusID:31901407}.

\bibitemdeclare{inproceedings}{KocmanKM16}
\bibitem{KocmanKM16}
\bibinfo{author}{Radim \surnamestart Kocman\surnameend},
  \bibinfo{author}{Zbynek \surnamestart Krivka\surnameend} \&
  \bibinfo{author}{Alexander \surnamestart Meduna\surnameend}
  (\bibinfo{year}{2016}): \emph{\bibinfo{title}{On double-jumping finite
  automata}}.
\newblock In \bibinfo{editor}{Henning \surnamestart Bordihn\surnameend},
  \bibinfo{editor}{Rudolf \surnamestart Freund\surnameend},
  \bibinfo{editor}{Benedek \surnamestart Nagy\surnameend} \&
  \bibinfo{editor}{Gy{\"{o}}rgy \surnamestart Vaszil\surnameend}, editors:
  {\slshape \bibinfo{booktitle}{Eighth Workshop on Non-Classical Models of
  Automata and Applications, {NCMA} 2016, Debrecen, Hungary, August 29-30,
  2016. Proceedings}}, {\slshape \bibinfo{series}{books@ocg.at}}
  \bibinfo{volume}{321}, \bibinfo{publisher}{{\"{O}}sterreichische Computer
  Gesellschaft}, pp. \bibinfo{pages}{195--210}.

\bibitemdeclare{article}{KocmanKMN22}
\bibitem{KocmanKMN22}
\bibinfo{author}{Radim \surnamestart Kocman\surnameend},
  \bibinfo{author}{Zbynek \surnamestart Krivka\surnameend},
  \bibinfo{author}{Alexander \surnamestart Meduna\surnameend} \&
  \bibinfo{author}{Benedek \surnamestart Nagy\surnameend}
  (\bibinfo{year}{2022}): \emph{\bibinfo{title}{A jumping $5' \to 3'$
  {W}atson-{C}rick finite automata model}}.
\newblock {\slshape \bibinfo{journal}{Acta Informatica}}
  \bibinfo{volume}{59}(\bibinfo{number}{5}), pp. \bibinfo{pages}{557--584},
  \doi{10.1007/S00236-021-00413-X}.

\bibitemdeclare{inproceedings}{kutrib2023reversible}
\bibitem{kutrib2023reversible}
\bibinfo{author}{Martin \surnamestart Kutrib\surnameend} \&
  \bibinfo{author}{Andreas \surnamestart Malcher\surnameend}
  (\bibinfo{year}{2023}): \emph{\bibinfo{title}{Reversible Two-Party
  Computations}}.
\newblock In \bibinfo{editor}{Zsolt \surnamestart Gazdag\surnameend},
  \bibinfo{editor}{Szabolcs \surnamestart Iv{\'{a}}n\surnameend} \&
  \bibinfo{editor}{Gergely \surnamestart Kov{\'{a}}sznai\surnameend}, editors:
  {\slshape \bibinfo{booktitle}{Proceedings of the 16th International
  Conference on Automata and Formal Languages, {AFL} 2023, Eger, Hungary,
  September 5-7, 2023}}, {\slshape \bibinfo{series}{{EPTCS}}}
  \bibinfo{volume}{386}, pp. \bibinfo{pages}{142--154},
  \doi{10.4204/EPTCS.386.12}.

\bibitemdeclare{article}{Leupold}
\bibitem{Leupold}
\bibinfo{author}{Peter \surnamestart Leupold\surnameend} \&
  \bibinfo{author}{Benedek \surnamestart Nagy\surnameend}
  (\bibinfo{year}{2010}): \emph{\bibinfo{title}{$5' \to 3'$ {W}atson-{C}rick
  AutomataWith Several Runs}}.
\newblock {\slshape \bibinfo{journal}{Fundam. Informaticae}}
  \bibinfo{volume}{104}(\bibinfo{number}{1-2}), pp. \bibinfo{pages}{71--91},
  \doi{10.3233/FI-2010-336}.

\bibitemdeclare{inproceedings}{lin-auto}
\bibitem{lin-auto}
\bibinfo{author}{Roussanka \surnamestart Loukanova\surnameend}
  (\bibinfo{year}{2007}): \emph{\bibinfo{title}{Linear Context Free
  Languages}}.
\newblock In \bibinfo{editor}{Cliff~B. \surnamestart Jones\surnameend},
  \bibinfo{editor}{Zhiming \surnamestart Liu\surnameend} \&
  \bibinfo{editor}{Jim \surnamestart Woodcock\surnameend}, editors: {\slshape
  \bibinfo{booktitle}{Theoretical Aspects of Computing - {ICTAC} 2007, 4th
  International Colloquium, Macau, China, September 26-28, 2007, Proceedings}},
  {\slshape \bibinfo{series}{Lecture Notes in Computer Science}}
  \bibinfo{volume}{4711}, \bibinfo{publisher}{Springer}, pp.
  \bibinfo{pages}{351--365}, \doi{10.1007/978-3-540-75292-9\_24}.

\bibitemdeclare{inproceedings}{NagyDNA07}
\bibitem{NagyDNA07}
\bibinfo{author}{Benedek \surnamestart Nagy\surnameend} (\bibinfo{year}{2008}):
  \emph{\bibinfo{title}{On $5' \to 3'$ Sensing Watson-Crick Finite Automata}}.
\newblock In \bibinfo{editor}{Max~H. \surnamestart Garzon\surnameend} \&
  \bibinfo{editor}{Hao \surnamestart Yan\surnameend}, editors: {\slshape
  \bibinfo{booktitle}{{DNA} Computing, 13th International Meeting on {DNA}
  Computing, {DNA13}, Memphis, TN, USA, June 4-8, 2007, Revised Selected
  Papers}}, {\slshape \bibinfo{series}{Lecture Notes in Computer Science}}
  \bibinfo{volume}{4848}, \bibinfo{publisher}{Springer}, pp.
  \bibinfo{pages}{256--262}, \doi{10.1007/978-3-540-77962-9\_27}.

\bibitemdeclare{conference}{CiE2009}
\bibitem{CiE2009}
\bibinfo{author}{Benedek \surnamestart Nagy\surnameend} (\bibinfo{year}{2009}):
  \emph{\bibinfo{title}{On a hierarchy of $5' \to 3'$ sensing {WK} finite
  automata languages}}.
\newblock In: {\slshape \bibinfo{booktitle}{Mathematical Theory and
  Computational Practice, CiE 2009, Abstract Booklet, Heidelberg, Germany}},
  pp. \bibinfo{pages}{266--275}.

\bibitemdeclare{inproceedings}{iConcept}
\bibitem{iConcept}
\bibinfo{author}{Benedek \surnamestart Nagy\surnameend} (\bibinfo{year}{2010}):
  \emph{\bibinfo{title}{$5' \to 3'$ Sensing {W}atson-{C}rick Finite Automata}}.
\newblock In: {\slshape \bibinfo{booktitle}{Sequence and Genome Analysis II
  –- Methods and Applications}}, \bibinfo{publisher}{iConcept Press}, pp.
  \bibinfo{pages}{39--56}.

\bibitemdeclare{article}{nagy2012class}
\bibitem{nagy2012class}
\bibinfo{author}{Benedek \surnamestart Nagy\surnameend} (\bibinfo{year}{2012}):
  \emph{\bibinfo{title}{A class of 2-head finite automata for linear
  languages}}.
\newblock {\slshape \bibinfo{journal}{Triangle}} \bibinfo{volume}{8}, pp.
  \bibinfo{pages}{89--99}.

\bibitemdeclare{article}{OnhierarcyofsensingWatson}
\bibitem{OnhierarcyofsensingWatson}
\bibinfo{author}{Benedek \surnamestart Nagy\surnameend} (\bibinfo{year}{2013}):
  \emph{\bibinfo{title}{On a hierarchy of $5^{\prime}\to 3^{\prime}$ sensing
  {W}atson–{C}rick finite automata languages}}.
\newblock {\slshape \bibinfo{journal}{Journal of Logic and Computation}}
  \bibinfo{volume}{23}(\bibinfo{number}{4}), pp. \bibinfo{pages}{855--872},
  \doi{10.1093/logcom/exr049}.
\newblock
  \eprint{https://academic.oup.com/logcom/article-pdf/23/4/855/2775832/exr049.pdf}.

\bibitemdeclare{inproceedings}{NCMA15}
\bibitem{NCMA15}
\bibinfo{author}{Benedek \surnamestart Nagy\surnameend} (\bibinfo{year}{2015}):
  \emph{\bibinfo{title}{A family of two-head pushdown automata}}.
\newblock In \bibinfo{editor}{Rudolf \surnamestart Freund\surnameend},
  \bibinfo{editor}{Markus \surnamestart Holzer\surnameend},
  \bibinfo{editor}{Nelma \surnamestart Moreira\surnameend} \&
  \bibinfo{editor}{Rog{\'{e}}rio \surnamestart Reis\surnameend}, editors:
  {\slshape \bibinfo{booktitle}{Seventh Workshop on Non-Classical Models of
  Automata and Applications - {NCMA} 2015, Porto, Portugal, August 31 -
  September 1, 2015. Proceedings}}, {\slshape \bibinfo{series}{books@ocg.at}}
  \bibinfo{volume}{318}, \bibinfo{publisher}{{\"{O}}sterreichische Computer
  Gesellschaft}, pp. \bibinfo{pages}{177--191}.

\bibitemdeclare{article}{2headpushdown}
\bibitem{2headpushdown}
\bibinfo{author}{Benedek \surnamestart Nagy\surnameend} (\bibinfo{year}{2020}):
  \emph{\bibinfo{title}{$5' \to 3'$ {W}atson-{C}rick pushdown automata}}.
\newblock {\slshape \bibinfo{journal}{Inf. Sci.}} \bibinfo{volume}{537}, pp.
  \bibinfo{pages}{452--466}, \doi{10.1016/J.INS.2020.06.031}.

\bibitemdeclare{article}{state-det}
\bibitem{state-det}
\bibinfo{author}{Benedek \surnamestart Nagy\surnameend} (\bibinfo{year}{2021}):
  \emph{\bibinfo{title}{State-deterministic $5' \rightarrow 3'$
  {W}atson-{C}rick automata}}.
\newblock {\slshape \bibinfo{journal}{Nat. Comput.}}
  \bibinfo{volume}{20}(\bibinfo{number}{4}), pp. \bibinfo{pages}{725--737},
  \doi{10.1007/S11047-021-09865-Z}.

\bibitemdeclare{inproceedings}{quasi-det}
\bibitem{quasi-det}
\bibinfo{author}{Benedek \surnamestart Nagy\surnameend} (\bibinfo{year}{2022}):
  \emph{\bibinfo{title}{Quasi-deterministic $5' \to 3'$ {W}atson-{C}rick
  Automata}}.
\newblock In \bibinfo{editor}{Henning \surnamestart Bordihn\surnameend},
  \bibinfo{editor}{G{\'{e}}za \surnamestart Horv{\'{a}}th\surnameend} \&
  \bibinfo{editor}{Gy{\"{o}}rgy \surnamestart Vaszil\surnameend}, editors:
  {\slshape \bibinfo{booktitle}{Proceedings 12th International Workshop on
  Non-Classical Models of Automata and Applications, {NCMA} 2022, Debrecen,
  Hungary, August 26-27, 2022}}, {\slshape \bibinfo{series}{{EPTCS}}}
  \bibinfo{volume}{367}, pp. \bibinfo{pages}{160--176},
  \doi{10.4204/EPTCS.367.11}.

\bibitemdeclare{article}{Annales}
\bibitem{Annales}
\bibinfo{author}{Benedek \surnamestart Nagy\surnameend} (\bibinfo{year}{2023}):
  \emph{\bibinfo{title}{On language classes accepted by stateless $5' \to 3'$
  {W}atson-{C}rick finite automata}}.
\newblock {\slshape \bibinfo{journal}{Annales Mathematicae et Informaticae}}
  \bibinfo{volume}{58}, pp. \bibinfo{pages}{110--120},
  \doi{10.33039/ami.2023.08.004}.

\bibitemdeclare{inproceedings}{NCMA24neck}
\bibitem{NCMA24neck}
\bibinfo{author}{Benedek \surnamestart Nagy\surnameend} (\bibinfo{year}{2024}):
  \emph{\bibinfo{title}{$5' \to 3'$ {W}atson-{C}rick Automata accepting
  Necklaces}}.
\newblock In \bibinfo{editor}{Florin \surnamestart Manea\surnameend} \&
  \bibinfo{editor}{Giovanni \surnamestart Pighizzini\surnameend}, editors:
  {\slshape \bibinfo{booktitle}{Proceedings 14th International Workshop on
  Non-Classical Models of Automata and Applications {(NCMA} 2024), {NCMA} 2024,
  G{\"{o}}ttingen, Germany, 12-13 August 2024}}, {\slshape
  \bibinfo{series}{{EPTCS}}} \bibinfo{volume}{407}, pp.
  \bibinfo{pages}{168--185}, \doi{10.4204/EPTCS.407.12}.

\bibitemdeclare{inproceedings}{NCMA25-MN2head}
\bibitem{NCMA25-MN2head}
\bibinfo{author}{Benedek \surnamestart Nagy\surnameend} (\bibinfo{year}{2025}):
  \emph{\bibinfo{title}{A {M}yhill-{N}erode type characterization of 2det{LIN}
  languages}}.
\newblock In \bibinfo{editor}{Nelma \surnamestart Moreira\surnameend} \&
  \bibinfo{editor}{Luca \surnamestart Prigioniero\surnameend}, editors:
  {\slshape \bibinfo{booktitle}{Proceedings 15th International Workshop on
  Non-Classical Models of Automata and Applications, {NCMA} 2025, Loughborough,
  UK, 21-22 July 2025}}, \bibinfo{series}{\thisvolume{6}}.

\bibitemdeclare{inproceedings}{NCMA-NK19}
\bibitem{NCMA-NK19}
\bibinfo{author}{Benedek \surnamestart Nagy\surnameend} \&
  \bibinfo{author}{Zita \surnamestart Kov{\'{a}}cs\surnameend}
  (\bibinfo{year}{2019}): \emph{\bibinfo{title}{On simple $5' \to 3'$ sensing
  {W}atson-{C}rick finite-state transducers}}.
\newblock In \bibinfo{editor}{Rudolf \surnamestart Freund\surnameend},
  \bibinfo{editor}{Markus \surnamestart Holzer\surnameend} \&
  \bibinfo{editor}{Jos{\'{e}}~M. \surnamestart Sempere\surnameend}, editors:
  {\slshape \bibinfo{booktitle}{Eleventh Workshop on Non-Classical Models of
  Automata and Applications, {NCMA} 2019, Valencia, Spain, July 2-3, 2019}},
  \bibinfo{publisher}{{\"{O}}sterreichische Computer Gesellschaft}, pp.
  \bibinfo{pages}{155--170}.

\bibitemdeclare{article}{NagyK21}
\bibitem{NagyK21}
\bibinfo{author}{Benedek \surnamestart Nagy\surnameend} \&
  \bibinfo{author}{Zita \surnamestart Kov{\'{a}}cs\surnameend}
  (\bibinfo{year}{2021}): \emph{\bibinfo{title}{On deterministic 1-limited $5'
  \to 3'$ sensing Watson-Crick finite-state transducers}}.
\newblock {\slshape \bibinfo{journal}{{RAIRO} Theor. Informatics Appl.}}
  \bibinfo{volume}{55}, pp. \bibinfo{pages}{1--18}, \doi{10.1051/ITA/2021007}.

\bibitemdeclare{inproceedings}{withtranslucent}
\bibitem{withtranslucent}
\bibinfo{author}{Benedek \surnamestart Nagy\surnameend} \&
  \bibinfo{author}{Friedrich \surnamestart Otto\surnameend}
  (\bibinfo{year}{2019}): \emph{\bibinfo{title}{Two-Head Finite-State Acceptors
  with Translucent Letters}}.
\newblock In \bibinfo{editor}{Barbara \surnamestart Catania\surnameend},
  \bibinfo{editor}{Rastislav \surnamestart Kr{\'a}lovi{\v{c}}\surnameend},
  \bibinfo{editor}{Jerzy \surnamestart Nawrocki\surnameend} \&
  \bibinfo{editor}{Giovanni \surnamestart Pighizzini\surnameend}, editors:
  {\slshape \bibinfo{booktitle}{SOFSEM 2019: Theory and Practice of Computer
  Science}}, \bibinfo{series}{Lecture Notes in Computer Science},
  \bibinfo{publisher}{Springer International Publishing},
  \bibinfo{address}{Cham}, pp. \bibinfo{pages}{406--418},
  \doi{10.1007/978-3-030-10801-4_32}.

\bibitemdeclare{article}{lin-trans}
\bibitem{lin-trans}
\bibinfo{author}{Benedek \surnamestart Nagy\surnameend} \&
  \bibinfo{author}{Friedrich \surnamestart Otto\surnameend}
  (\bibinfo{year}{2020}): \emph{\bibinfo{title}{Linear automata with
  translucent letters and linear context-free trace languages}}.
\newblock {\slshape \bibinfo{journal}{{RAIRO} Theor. Informatics Appl.}}
  \bibinfo{volume}{54}, p.~\bibinfo{pages}{3}, \doi{10.1051/ITA/2020002}.

\bibitemdeclare{article}{10.1007/s00236-019-00362-6}
\bibitem{10.1007/s00236-019-00362-6}
\bibinfo{author}{Benedek \surnamestart Nagy\surnameend} \&
  \bibinfo{author}{Shaghayegh \surnamestart Parchami\surnameend}
  (\bibinfo{year}{2021}): \emph{\bibinfo{title}{On deterministic sensing $5'
  \to 3'$ {W}atson–{C}rick finite automata: a full hierarchy in 2det{LIN}}}.
\newblock {\slshape \bibinfo{journal}{Acta Inf.}}
  \bibinfo{volume}{58}(\bibinfo{number}{3}), p. \bibinfo{pages}{153–175},
  \doi{10.1007/s00236-019-00362-6}.

\bibitemdeclare{article}{NaCoNP22}
\bibitem{NaCoNP22}
\bibinfo{author}{Benedek \surnamestart Nagy\surnameend} \&
  \bibinfo{author}{Shaghayegh \surnamestart Parchami\surnameend}
  (\bibinfo{year}{2022}): \emph{\bibinfo{title}{$5'\to 3'$ {W}atson-{C}rick
  automata languages-without sensing parameter}}.
\newblock {\slshape \bibinfo{journal}{Nat. Comput.}}
  \bibinfo{volume}{21}(\bibinfo{number}{4}), pp. \bibinfo{pages}{679--691},
  \doi{10.1007/S11047-021-09869-9}.

\bibitemdeclare{inproceedings}{AFL17}
\bibitem{AFL17}
\bibinfo{author}{Benedek \surnamestart Nagy\surnameend},
  \bibinfo{author}{Shaghayegh \surnamestart Parchami\surnameend} \&
  \bibinfo{author}{Hamid Mir~Mohammad \surnamestart Sadeghi\surnameend}
  (\bibinfo{year}{2017}): \emph{\bibinfo{title}{A New Sensing $5' \to 3'$
  Watson-Crick Automata Concept}}.
\newblock In \bibinfo{editor}{Erzs{\'{e}}bet \surnamestart
  Csuhaj{-}Varj{\'{u}}\surnameend}, \bibinfo{editor}{P{\'{a}}l \surnamestart
  D{\"{o}}m{\"{o}}si\surnameend} \& \bibinfo{editor}{Gy{\"{o}}rgy \surnamestart
  Vaszil\surnameend}, editors: {\slshape \bibinfo{booktitle}{Proceedings 15th
  International Conference on Automata and Formal Languages, {AFL} 2017,
  Debrecen, Hungary, September 4-6, 2017}}, {\slshape
  \bibinfo{series}{{EPTCS}}} \bibinfo{volume}{252}, pp.
  \bibinfo{pages}{195--204}, \doi{10.4204/EPTCS.252.19}.

\bibitemdeclare{inproceedings}{UCNC18}
\bibitem{UCNC18}
\bibinfo{author}{Shaghayegh \surnamestart Parchami\surnameend} \&
  \bibinfo{author}{Benedek \surnamestart Nagy\surnameend}
  (\bibinfo{year}{2018}): \emph{\bibinfo{title}{Deterministic Sensing $5' \to
  3'$ {W}atson-{C}rick Automata Without Sensing Parameter}}.
\newblock In \bibinfo{editor}{Susan \surnamestart Stepney\surnameend} \&
  \bibinfo{editor}{Sergey \surnamestart Verlan\surnameend}, editors: {\slshape
  \bibinfo{booktitle}{Unconventional Computation and Natural Computation - 17th
  International Conference, {UCNC} 2018, Fontainebleau, France, June 25-29,
  2018, Proceedings}}, {\slshape \bibinfo{series}{Lecture Notes in Computer
  Science}} \bibinfo{volume}{10867}, \bibinfo{publisher}{Springer}, pp.
  \bibinfo{pages}{173--187}, \doi{10.1007/978-3-319-92435-9\_13}.

\bibitemdeclare{inproceedings}{onreversibleautomata}
\bibitem{onreversibleautomata}
\bibinfo{author}{Jean-Eric \surnamestart Pin\surnameend}
  (\bibinfo{year}{1992}): \emph{\bibinfo{title}{On reversible automata}}.
\newblock In \bibinfo{editor}{Imre \surnamestart Simon\surnameend}, editor:
  {\slshape \bibinfo{booktitle}{LATIN '92}}, \bibinfo{publisher}{Springer},
  \bibinfo{address}{Berlin, Heidelberg}, pp. \bibinfo{pages}{401--416},
  \doi{10.1007/BFb0023844}.

\bibitemdeclare{article}{rabin1959finite}
\bibitem{rabin1959finite}
\bibinfo{author}{Michael~O. \surnamestart Rabin\surnameend} \&
  \bibinfo{author}{Dana \surnamestart Scott\surnameend} (\bibinfo{year}{1959}):
  \emph{\bibinfo{title}{Finite automata and their decision problems}}.
\newblock {\slshape \bibinfo{journal}{IBM journal of research and development}}
  \bibinfo{volume}{3}(\bibinfo{number}{2}), pp. \bibinfo{pages}{114--125},
  \doi{10.1147/rd.32.0114}.

\bibitemdeclare{book}{book}
\bibitem{book}
\bibinfo{author}{Grzegorz \surnamestart Rozenberg\surnameend} \&
  \bibinfo{author}{Arto \surnamestart Salomaa\surnameend}
  (\bibinfo{year}{1997}): \emph{\bibinfo{title}{Handbook of Formal Languages}}.
\newblock \bibinfo{publisher}{Springer}, \doi{10.1007/978-3-642-59126-6}.

\bibitemdeclare{inproceedings}{DNAComputing}
\bibitem{DNAComputing}
\bibinfo{author}{Grzegorz \surnamestart Rozenberg\surnameend} \&
  \bibinfo{author}{Arto \surnamestart Salomaa\surnameend}
  (\bibinfo{year}{1999}): \emph{\bibinfo{title}{{DNA} Computing: New Ideas and
  Paradigms}}.
\newblock In \bibinfo{editor}{Jir{\'i} \surnamestart Wiedermann\surnameend},
  \bibinfo{editor}{Peter \surnamestart van Emde~Boas\surnameend} \&
  \bibinfo{editor}{Mogens \surnamestart Nielsen\surnameend}, editors: {\slshape
  \bibinfo{booktitle}{Automata, Languages and Programming}},
  \bibinfo{publisher}{Springer Berlin Heidelberg}, \bibinfo{address}{Berlin,
  Heidelberg}, pp. \bibinfo{pages}{106--118}, \doi{10.1007/3-540-48523-6_9}.

\bibitemdeclare{inproceedings}{SempereG94}
\bibitem{SempereG94}
\bibinfo{author}{Jos{\'{e}}~M. \surnamestart Sempere\surnameend} \&
  \bibinfo{author}{Pedro \surnamestart Garc{\'{\i}}a\surnameend}
  (\bibinfo{year}{1994}): \emph{\bibinfo{title}{A Characterization of Even
  Linear Languages and its Application to the Learning Problem}}.
\newblock In \bibinfo{editor}{Rafael~C. \surnamestart Carrasco\surnameend} \&
  \bibinfo{editor}{Jos{\'{e}} \surnamestart Oncina\surnameend}, editors:
  {\slshape \bibinfo{booktitle}{Grammatical Inference and Applications, Second
  International Colloquium, ICGI-94, Alicante, Spain, September 21-23, 1994,
  Proceedings}}, {\slshape \bibinfo{series}{Lecture Notes in Computer Science}}
  \bibinfo{volume}{862}, \bibinfo{publisher}{Springer}, pp.
  \bibinfo{pages}{38--44}, \doi{10.1007/3-540-58473-0\_135}.

\end{thebibliography}
\end{document}